\newif\ifsup\suptrue 
\def\nocomment{1}
\definecolor{dkblue}{cmyk}{1,.54,.04,.19}
\newcommand{\floor}[1]{\left \lfloor {#1} \right\rfloor}
\newcommand{\ceil}[1]{\left \lceil {#1} \right\rceil}
\newcommand{\tceil}[1]{A_{\scriptscriptstyle{>}}(#1)}
\newcommand{\tfloor}[1]{A_{\scriptscriptstyle{\leq}}(#1)}
\newcommand{\low}{{\scriptstyle\texttt{low}}}
\newcommand{\high}{{\scriptstyle\texttt{high}}}
\newcommand{\crit}{{\scriptstyle\texttt{crit}}}
\newcommand{\used}{{\scriptstyle\texttt{used}}}
\newcommand{\succf}{\texttt{succ}}
\newcommand{\fgap}{\delta}
\newcommand{\fgapmin}{\fgap_\text{min}}
\newcommand{\cost}{C} 
\newcommand{\costlow}{\cost_{\low}} 
\newcommand{\costhigh}{\cost_{\high}} 
\newcommand{\costmin}{\cost_{\min}} 
\newcommand{\costcrit}{\cost_{\crit}} 
\newcommand{\sinit}{s_\textup{init}}
\newcommand{\nstar}{n_*}
\newcommand{\nstartree}{n_{T*}}
\newcommand{\nstargraph}{n_{G*}}
\newcommand{\coststar}{\cost^*}
\newcommand{\nbisec}{n_{\scriptstyle{\texttt{exp}}}}
\newcommand{\node}{n}
\newcommand{\cS}{\mathcal S}
\newcommand{\ibbslong}{Iterative Budgeted Exponential Search}
\newcommand{\ibbs}{IBEX}
\newcommand{\ibbstslong}{Budgeted Tree Search}
\newcommand{\ibbsts}{BTS}
\newcommand{\ibbsgslong}{Budgeted Graph Search}
\newcommand{\ibbsgs}{BGS}
\newcommand{\dovibbslong}{Dovetailing \ibbs}
\newcommand{\dovibbs}{Dov\ibbs}
\newcommand{\dovibbsgs}{Dov\ibbsgs}
\newcommand{\dovibbsts}{Dov\ibbsts}
\newcommand{\ubslong}{Uniform Budgeted Scheduler}
\newcommand{\ubs}{\text{UBS}}
\newcommand{\eda}{EDA*}
\newcommand{\idacr}{IDA*\_CR}
\newcommand{\ida}{IDA*}
\newcommand{\budget}{b}
\newcommand{\query}{\textcolor{lstfuncolor}{\texttt{query}}}
\newcommand{\queryp}{\textcolor{lstfuncolor}{\query^+}}
\newcommand{\querylim}{\textcolor{lstfuncolor}{\query_{\text{lim}}}} 
\newcommand{\queryext}{\textcolor{lstfuncolor}{\query_{\text{ext}}}} 
\newcommand{\queryextp}{\textcolor{lstfuncolor}{\query_{\text{ext}}^+}} 
\newcommand{\bisection}{\textcolor{lstfuncolor}{\texttt{exp\_search}}}
\newcommand{\queryint}{\textcolor{lstfuncolor}{\query_{\text{int}}}} 
\newcommand{\exptext}[1]{\text{#1}}
\newcommand{\state}{\texttt{state}} 
\crefname{listing}{Algorithm}{Algorithms}
\Crefname{listing}{Algorithm}{Algorithms}
\theoremstyle{definition}
\newtheorem{theorem}{Theorem}
\newtheorem{proposition}[theorem]{Proposition}
\newtheorem{corollary}[theorem]{Corollary}
\newtheorem{remark}[theorem]{Remark}
\newtheorem{example}[theorem]{Example}
\definecolor{darkred}{rgb}{.7,0,0}
\definecolor{darkgreen}{rgb}{0,.5,0}
\definecolor{darkblue}{rgb}{0,0,.8}
\definecolor{darkcyan}{rgb}{0,0.6,.6}
\definecolor{darkorange}{rgb}{.8,.4,0}
\definecolor{gray}{rgb}{.4,.4,.4}
\ifdefined\nocomment{}
\newcommand{\todo}[1]{}
\newcommand{\comment}[1]{}
\newcommand{\warning}[1]{}
\newcommand{\quest}[1]{}
\newcommand{\todot}[1]{}
\newcommand{\todo}[1]{\textcolor{darkorange}{(\emph{TODO: #1})}}
\newcommand{\comment}[1]{\textcolor{darkblue}{(\emph{#1})}}
\newcommand{\warning}[1]{\textcolor{red}{(\emph{WARNING: #1})}}
\newcommand{\quest}[1]{\textcolor{darkgreen}{(\emph{Q: #1})}}
\newcommand{\todot}[1]{{\tiny \textcolor{darkorange}{(\emph{TOR: #1})}}}
\newcommand{\Naturals}{\mathbb{N}_1}
\newcommand{\Nonnegints}{\mathbb{N}_0}
\DeclareMathOperator*{\argmin}{\text{argmin}}
\newcommand{\eps}{\varepsilon}
\newcommand{\citet}[1]{\citeauthor{#1}~[\citeyear{#1}]}
\definecolor{lstcommentcolor}{rgb}{.8,.4,0}
\definecolor{lstfuncolor}{rgb}{0,0,0} 
\definecolor{lstfuncolorr}{rgb}{.7,0,0} 
\definecolor{lstcommentcolor}{rgb}{0,0,0}
\definecolor{lstfuncolor}{rgb}{0,0,0} 
\definecolor{lstfuncolorr}{rgb}{0,0,0} 
\ifsup\color{darkgreen}\else\color{black}\fi,
\ifsup\color{blue}\else\color{black}\fi,
\ifsup\color{darkgreen}\else\color{black}\fi,
\newcommand*{\code}{\lstinline[keywordstyle=\color{lstfuncolor}, basicstyle=\color{lstfuncolor},classoffset=1,keywordstyle=\color{lstfuncolor}]}
\tikzstyle{ya}=[anchor=east,font=\scriptsize]
\tikzstyle{xa}=[anchor=north,font=\scriptsize]
\tikzstyle{axis} = [-latex]
\tikzstyle{line} = [thick]
\tikzstyle{end} = [fill]
\title{\ibbslong}
\author{
Malte Helmert\footnotemark$^1$
\and
Tor Lattimore$^2$\and
Levi H. S. Lelis$^3$\and
Laurent Orseau$^2$ \textnormal{and} 
Nathan R. Sturtevant$^4$\\
\affiliations
$^1$Department of Mathematics and Computer Science, University of Basel, Switzerland\\
$^2$DeepMind, London, UK\\
$^3$Departamento de Inform\'{a}tica, Universidade Federal de Vi\c{c}osa, Brazil\\
$^4$Department of Computing Science, University of Alberta, Edmonton, AB, Canada\\
\emails
malte.helmert@unibas.ch,
\{lattimore, lorseau\}@google.com,
levi.lelis@ufv.br,
nathanst@ualberta.ca
}
\begin{document}

\maketitle

\begin{abstract}
\ifsup
We tackle two long-standing problems related to re-expansions in heuristic search algorithms. For graph search,
A* can require $\Omega(2^{\nstar})$ expansions, where $\nstar$ is the number of states within the final $f$ bound. Existing algorithms that address this problem like B and B' improve this bound to $\Omega(\nstar^2)$. For tree search, IDA* can also require $\Omega(\nstar^2)$ expansions. We describe a new algorithmic framework that iteratively controls an expansion budget and solution cost limit, giving rise to new graph and tree search algorithms for which the number of expansions is 
$O(\nstar \log \coststar)$, where $\coststar$ is the optimal solution cost. Our experiments show that
the new algorithms are robust in scenarios where existing algorithms fail. In the case of tree search,
our new algorithms have no overhead over IDA* in scenarios to which IDA* is well suited and can therefore be recommended as a general replacement for IDA*.
\else
We tackle two long-standing problems related to re-expansions in heuristic search algorithms. For graph search,
A* can require $\Omega(2^{n})$ expansions, where $n$ is the number of states within the final $f$ bound. Existing algorithms that address this problem like B and B' improve this bound to $\Omega(n^2)$. For tree search, IDA* can also require $\Omega(n^2)$ expansions. We describe a new algorithmic framework that iteratively controls an expansion budget and solution cost limit, giving rise to new graph and tree search algorithms for which the number of expansions is 
$O(n \log \coststar)$, where $\coststar$ is the optimal solution cost. Our experiments show that
the new algorithms are robust in scenarios where existing algorithms fail. In the case of tree search,
our new algorithms have no overhead over IDA* in scenarios to which IDA* is well suited and can therefore be recommended as a general replacement for IDA*.
\fi\end{abstract}

\renewcommand{\thefootnote}{\fnsymbol{footnote}}
\setcounter{footnote}{1}
\footnotetext{Alphabetical order. This paper is the result of merging two independent submissions to IJCAI 2019 \cite{orseau2019zooming,sturtevant2019exponential}.}
\renewcommand{\thefootnote}{\arabic{footnote}}
\setcounter{footnote}{0}

\section{Introduction}

There are two long-standing problems in heuristic search where existing algorithms 
struggle 
to balance the number of expansions and re-expansions performed in comparison to an oracle. One is in graph search, the other in tree search.

The first problem deals with admissible but inconsistent heuristics in \emph{graph search}.
With some caveats \cite{holte2010misconceptions}, A* with an admissible and consistent heuristic 
expands the minimum required number of states \cite{hart1968astar,dechter1985generalized}. However, with inconsistent heuristics it may expand exponentially more states than more cautious algorithms such as 
B \cite{Mar77} and B' \cite{mero1984pathmax},
which have a quadratic worst case.

The second problem is in heuristic \emph{tree search} algorithms that use memory that grows only linearly with the search depth.
In contrast, A* memory usage
grows linearly with time and often exponentially with the depth of the search.
To satisfy such low memory requirements, linear-memory tree search algorithms perform successive depth-first searches
with an increasing limit on the cost.
They also forgo global duplicate elimination, meaning that they do not detect if multiple paths from the initial state lead to the same state, which can lead to exponentially worse runtime compared to algorithms like A* when such duplicates are frequent. Hybrid algorithms that uses bounded memory for duplicate elimination are possible \cite[for example]{akagi2010transpositions}.

IDA* \cite{korf1985ida} is a cautious linear-memory algorithm that increases the f-cost bound minimally
(see also RBFS \cite{1993rbfskorf}).
At each iteration, IDA* searches all nodes up to the f-cost bound.
The minimum cost of the nodes pruned in one iteration
becomes the cost bound for the next iteration.
This approach ensures that the last cost bound will be exactly the minimum solution cost.
This is efficient when the number of nodes matching the
current cost bound grows exponentially with the number
of iterations,
as the total number of expansions will be dominated by the last iteration.
In the worst case however, IDA* may expand only one new node in each iteration, leading to a quadratic number of (re-)expansions.
Several methods have been developed to mitigate the re-expansion overhead of IDA* \cite{burns2013idaim,sharon2014eda,hatem2018solving,sarkar1991reducing,wah1994rida}
by increasing the cost bound more aggressively at each iteration
with the aim of achieving an exponential growth rate.
However, with this approach the last cost bound can be larger than the minimum solution cost, which may incur an arbitrarily large performance penalty.
For these algorithms, theoretical guarantees (when provided) require strong assumptions such as uniformity of the costs or branching factor \cite[for example]{hatem2015rbfscr}.


We propose a novel framework called \ibbslong{} (\ibbs) 
guaranteeing for both problems described above that the number of expansions 
is near-linear in the number of nodes whose cost is at most the minimum solution cost.
This is achieved by combining two ideas: (1) a budget on the number of expansions and (2) an exponential search for the maximum f-cost that can be searched exhaustively within the given budget.
This framework proves that no solution can be found for the current budget, 
and then doubles it until a solution is found.
This ensures that the last budget is always within twice the minimum required budget,
while amortizing the work on early iterations due to the exponential growth of the budget.

We develop two simple and fast algorithms that enjoy near-linear expansion guarantees, propose a number of enhancements,
show how the tree search and graph search problems can be reduced to our framework,
and show that these algorithms perform at least as well as state-of-the-art algorithms on a number of
traditional domains without exhibiting any of the catastrophic failure cases.

\section{Heuristic Search Problems}\label{sec:search}

A black box heuristic search problem is defined by a finite state space $\cS$, a set of goal states $\cS^* \subseteq \cS$,
a cost function $c : \cS \times \cS \to [0,\infty]$ and an initial state $\sinit \in \cS$.
The successors of a state $s$ are those states $s'$ that can be reached by a finite-cost edge: $\succf(s) = \{s' : c(s, s') < \infty\}$.
This defines a directed graph $\mathcal{G} = (\cS, E)$ where states correspond to vertices in the graph and the edges are the finite-cost successors $E = \{(s, s') : c(s, s') < \infty\}$.
A path is a sequence of states $\pi = (s_t)_{t=1}^m$ with $s_1 = \sinit$ and its cost is
$g(\pi) = \sum_{t=1}^{m-1} c(s_t, s_{t+1})$,
which may be infinite if there is no edge between adjacent states. 
The end state of path $\pi = (s_t)_{t=1}^m$ is
$\state(\pi) = s_m$ and its successor paths are $\succf(\pi) =  \{(s_t)_{t=1}^{m+1} : s_{t+1} \in \succf(s_m)\}$.
A state $s$ is expanded when the function $\succf(s)$ is called for $s$; a state $s'$ is generated when $\succf(s)$ is called creating $s' \in \succf(s)$. 
A node 
corresponds to a single path $\pi$ from the root of a search tree.
Expanding a node corresponds to expanding $\state(\pi)$ and generating all the corresponding successor paths (nodes). 
Search algorithms may expand the same state multiple times because multiple nodes might represent the same state.
Let $\pi^*(s)=\argmin_{\pi: \state(\pi)=s} g(\pi)$ be a least-cost path to state $s$ and $g^*(s) = g(\pi^*(s))$ be the cost of such a path. 
Let $\Pi^*$ be the set of all paths from the initial state to all goal states. Then, the cost of a least-cost path to a goal state is
$\coststar = \min_{\pi \in \Pi^*} g(\pi)$.
The objective is to find a least-cost path from the initial state to a goal state. 

Let $h^*(s)$ be the minimal cost over all paths from $s$ to any goal state.
A heuristic is a function $h : \cS \to [0,\infty]$ that provides an estimate of $h^*$.
A heuristic is admissible if $h(s) \leq h^*(s)$ for all states $s \in \cS$ and consistent if
$h(s) \leq h(s') + c(s, s')$ for all pairs of states $s, s'$.
The $f$-cost of a path is $f(\pi) = g(\pi) + h(\state(\pi))$.
Note that if $s=\state(\pi)$ is a goal state and the heuristic is admissible we must have $h(s)=0$.

We say that a search algorithm is a \emph{graph search} if it
eliminates duplicates of states generated by the algorithm;
otherwise it is called a \emph{tree search}.

\subsection{Graph Search}
With a consistent heuristic, $f$-costs along a path are non-decreasing, thus a graph search algorithm must expand all states in the graph with $f(s) = g^*(s)+h(s) < \coststar$.
In this setting, A* has an optimal behaviour \cite{dechter1985generalized}.
When the heuristic is admissible but inconsistent, 
for comparing algorithms
one could consider the ideal number of nodes that A* would expand if the heuristic
was made consistent.
Unfortunately, 
not only does there exist no optimal algorithm for this case \cite{mero1984pathmax},
but it can even be shown that \emph{all} algorithms may need to expand exponentially too many nodes
in some cases (see \ifsup{}\cref{sec:intract}\else{}supplementary material\fi).
Hence we focus our attention on the following relaxed notion of optimality.
Let $\nstargraph = |\{s : \min_{\pi:\state(\pi) = s}\max_{s' \in \pi(s)} f(s') \leq \coststar\}|$ 
be the number of states that can be reached by a path along which all states
have $f$-cost at most $\coststar$;
this is the definition used by \citet{Mar77}.
\todo{check this thoroughly}
Then, there exist problems where A* performs up to $\Omega(2^{\nstargraph})$ expansions \cite{Mar77}.
This limitation has been partially addressed with the B~\cite{Mar77} and B'~\cite{mero1984pathmax} algorithms for which the number of expansions is at most $O(\nstargraph^2)$.
We improve on this result with a new algorithm for which the number of expansions is at most $O(\nstargraph \log(\coststar))$.

\subsection{Tree Search}

Tree search algorithms work on the tree expansion of the state space, where every path from $\sinit$ corresponds to a tree node. Consequently, states reached on multiple paths will be expanded multiple times.

We say a path $\pi$ is
necessarily expanded if $\max_{s' \in \pi} f(s') < \coststar$ and
possibly expanded if $\max_{s' \in \pi} f(s') \leq \coststar$.
A tree search algorithm must always expand all necessarily expanded paths
and will usually also expand some paths that are possibly but not necessarily expanded.
To avoid the subtleties of tie-breaking, we discuss upper bounds in terms of possibly expanded paths,
leaving a more detailed analysis for future work.
We write $\nstartree$ for the number of possibly expanded paths in a tree search.

In the worst case, IDA* may perform $\Omega(\nstartree^2)$ expansions.
To mitigate this issue, algorithms such as IDA*\_CR~\cite{sarkar1991reducing} and EDA*~\cite{sharon2014eda} increase the $f$-cost bound more aggressively.
These methods are effective when the growth of the tree is regular enough, but can fail catastrophically when the tree grows rapidly near the optimal $f$-cost, as will be observed in the experiments.
We provide an algorithm that performs at most a logarithmic factor more expansions than $\nstartree$ 
and uses memory that is linear in the search depth.

\ifsup
\else
\paragraph{Supplementary material.}
Detailed proofs of the various claims made in this paper are provided in the extended version \cite{helmert2019iterative}.
\fi

\paragraph{Notation.}
The natural numbers are $\Nonnegints = \{0,1,2,\ldots\}$ and $\Naturals = \{1,2,3,\ldots\}$.
For real-valued $x$ and $a$ let $\ceil{x}_{\geq a} = \max\{a, \ceil{x}\}$ and similarly for $\floor{x}_{\geq a}$.

\section{Abstract View}

We now introduce a useful abstraction that allows us to treat tree and graph search in a unified manner. Tree search is used as a motivating example.
The problem with algorithms like EDA* that aggressively increase the $f$-cost limit is the possibility of a significant number of wasted expansions once the $f$-cost limit is above $\coststar$. The core insight of our framework is that this can be mitigated by stopping the search if the number of expansions exceeds a budget and slowly increasing the budget in a careful manner. 

A depth-first search with an $f$-cost limit and expansion budget reveals that either (a) the expansion budget was insufficient to search the whole tree with $f$-cost smaller or equal to the limit, or (b) the expansion budget was sufficient. 
In the latter case, if the goal is found, then the algorithm can return a certifiably optimal solution.
Furthermore, when the budget is insufficient the largest $f$-cost of a node visited by the search serves as an upper bound on the largest $f$-cost for which the budget will be exceeded. When the budget is sufficient, the smallest $f$-cost in the fringe is a lower bound on the same. 
This information means that combining exponential search \cite{bentley1976expsearch} with repeated depth-first searches with a varying $f$-cost limit and fixed expansion budget can be used to quickly find a solution if the budget is sufficient to expand all nodes with $f$-cost less than $C^*$ and otherwise produce a certificate that the budget is insufficient, a process we explain in detail in \cref{sec:exp_search}.

Based on this idea, the basic version of our new algorithm operates in iterations. Within each iteration the algorithm makes multiple depth-first searches with a fixed expansion budget and varying $f$-cost limits. An iteration ends once the algorithm finds the optimal solution and expands all paths with $f$-cost less than $\coststar$, or once it can prove that the present expansion budget is insufficient to find the optimal solution. At the end of the iteration the expansion budget is doubled.

In what follows we abstract the search procedure into a query function that accepts as input an $f$-cost limit and an expansion budget and returns an interval that contains the smallest $f$-cost limit for which the budget is insufficient or throws an exception with an optimal solution if the $f$-cost limit is at least $C^*$ and the expansion budget is larger or equal to the number of nodes with $f$-cost less than the limit.

\paragraph{Formal model.}
We consider an increasing list $A$ of real numbers $v \ge 1$, possibly with repetition.
Define a function $n : [1,\infty) \to \Nonnegints$ by $n(\cost) = |\{v \in A : v \leq \cost\}|$, where multiple occurrences are counted separately. 
Next, let $\coststar \in A$
and $\nstar = n(\coststar)$.
In our application to tree search, $A$ is the list of all node $f$ values, including duplicates, and $n(\cost)$ is the number of paths in the search tree for which the $f$ values is at most $\cost$ and $\coststar$ is the cost of the optimal solution. We can require that all $f$ values are at least 1 with no loss of generality: if $h(\sinit) < 1$ we introduce an artificial new initial state with heuristic value 1 and an edge of cost $1 - h(\sinit)$ from the new state to $\sinit$. This shifts all path costs by at most 1, so if $C'$ is the original optimal solution cost, we have $C^* \le C' + 1$.

\paragraph{Query functions.}
Let $\cost_{\crit}(b) = \min\{v \in A : n(v) > b\}$ be
the smallest value in $A$ for which expansion budget $b$ is insufficient.
We define three functions, $\querylim$, $\queryint$ and $\queryext$, all accepting as input an $f$-cost limit $C$ and expansion budget $b$. A call to any of the functions makes at most $\min\{b, n(C)\}$ expansions and throws an exception with an optimal solution if $\nstar \leq n(C) \leq b$. Otherwise all three functions return an interval containing $\cost_{\crit}(b)$ on which we make different assumptions as described next.
The abstract objective is to make a query that finds an optimal solution using as few expansions as possible.

\paragraph{Limited feedback.}
In the limited feedback model the query function returns an interval that only provides information about whether or not the expansion budget $b$ was smaller or larger than $n(C)$:
\begin{multline*}
    \!\!\!\!\!\querylim(\cost, b) = 
    \begin{cases}
        [\cost,\infty] & \text{if } \cost < \cost_{\crit}(b)
        \quad\exptext{\scriptsize budget sufficient}\,, \\
        [1,\cost] & \text{if } \cost \geq \cost_{\crit}(b)
        \quad\exptext{\scriptsize budget exceeded}\,.
    \end{cases}
\end{multline*}

\paragraph{Integer feedback.}
In many practical problems, the list $A$ only contains integers. In this case we consider the feedback model:
\begin{align*}
    \queryint(\cost, b) =
    \begin{cases}
        [\floor{\cost}+1,\infty] &  \text{if } \cost < \cost_{\crit}(b)\,, \\
        [1,\cost] & \text{if } \cost \geq \cost_{\crit}(b)\,.
    \end{cases}
\end{align*}
The discrete nature of the returned interval means that if $\cost_{\crit}(b) \in [C, C+1]$, then 
\begin{align*}
    \queryint(C, b) \cap \queryint(C+1,b) = \{\cost_{\crit}(b)\}\,,
\end{align*}
In other words, the exact value of $\cost_{\crit}(b)$ can be identified by making queries on either side of an interval of unit width containing it. By contrast, $\querylim$ cannot be used to identify $\cost_{\crit}(b)$ exactly.

\paragraph{Extended feedback.}
For heuristic search problems the interval returned by the query function can be refined more precisely by using the smallest observed $f$-cost in the fringe and largest $f$-cost of an expanded path. Define 
   $ \tceil{\cost} = \min\{v \in A : v > \cost\}$ and 
   $\tfloor{\cost} = \max\{v \in A : v \leq \cost\}$.
When $\cost \in A$ we define $\delta(\cost) 
    = \tceil{C} - \tfloor{\cost}$ and
\begin{align*}
    \delta_{\min} &= \min\{\delta(\cost) : \cost \leq C^*,\, C \in A\}\,. 
\end{align*}
These concepts are illustrated in \cref{fig:ill}.
In the extended feedback model, when the expansion budget is sufficient the response of the query is the interval $[\tceil{\cost}, \infty]$. Otherwise the query returns an interval $[1, v]$ where $v$ is any value in $A \cap [\cost_{\crit}(b), \cost]$
(for example $v = \tfloor{\cost})$:
\begin{multline*}
    \queryext(\cost, b) = \\
    \begin{cases}
        [\tceil{\cost}, \infty] &  \text{if } \cost < \cost_{\crit}(b)\,, \\
        [1, v], \text{with }v \in [\cost_{\crit}(b), \cost] \cap A
        & \text{if } \cost \geq \cost_{\crit}(b)\,.
    \end{cases}
\end{multline*}
In tree search the value of $v$ when $C \geq \cost_{\crit}(b)$ is the largest $f$-cost over paths expanded by the search, which may depend on the expansion order.
As for integer feedback, the information provided by extended feedback
allows the algorithm to prove that an expansion budget is insufficient to find a solution. 
\begin{figure}[htb!]
\centering
\begin{tikzpicture}[scale=0.9,yscale=0.5]

 \draw[axis] (0,0) -- (8.0,0);
 \node[anchor=west] at (8.0,0) {$\cost$};
 \draw[axis] (0,0) -- (0,4.5);
 \node[anchor=south] at (0,4.5) {$n$};

 \draw[dotted] (0,1.5) -- (8.0,1.5); 
 \node[anchor=east] at (-0.1,1.5) {$b\!=\!3$};
 \draw[dotted] (0,0.5) -- (0.5,0.5); 
 \node[anchor=east] at (-0.1,0.5) {$1$};

 \draw[line] (0.5,0.5) -- (1.5,0.5); \draw[end] (0.5,0.5) circle (1pt);
\draw[line] (1.5,1.0) -- (2.5,1.0); \draw[end] (1.5,1.0) circle (1pt);
\draw[line] (2.5,2.0) -- (4.0,2.0); \draw[end] (2.5,2.0) circle (1pt);
\draw[line] (4.0,2.5) -- (6.0,2.5); \draw[end] (4.0,2.5) circle (1pt);
\draw[line] (6.0,3.0) -- (6.5,3.0); \draw[end] (6.0,3.0) circle (1pt);\draw [decorate,decoration={brace,amplitude=3pt}]
(6.0,3.1) -- (6.5,3.1);
\node[anchor=south] at (6.25,3.2) {$\delta_{\min}$};
\draw[line] (6.5,4.0) -- (7.5,4.0); \draw[end] (6.5,4.0) circle (1pt);

 \draw[line,-latex] (7.5,4.5) -- (8.0,4.5);\draw[end] (7.5,4.5) circle (1pt);

 \draw[dotted] (2.5,0) -- (2.5,2.0);
 \node[anchor=south] at (2.5,-2.1) {$\cost_{\crit}(b)$};
 \draw[dotted] (4.0,0) -- (4.0,2.5);
 \node[anchor=south] at (4.0,-2.1) {$\tfloor{\cost}$};
 \draw[dotted] (5.1,0) -- (5.1,2.5);
 \node[anchor=south] at (5.1,-2.0) {$\cost$};
 \draw[dotted] (6.0,0) -- (6.0,3.0);
 \node[anchor=south] at (6.2,-2.1) {$\tceil{\cost}$};
 \draw[dotted] (7.5,0) -- (7.5,4.5);
 \node[anchor=south] at (7.5,-2.0) {$\coststar$};

 \node[xa] at (0.5,0) {1};
\node[xa,color=gray] at (1.0,0) {2};
\node[xa] at (1.5,0) {3};
\node[xa,color=gray] at (2.0,0) {4};
\node[xa] at (2.5,0) {5};
\node[xa,color=gray] at (3.0,0) {6};
\node[xa,color=gray] at (3.5,0) {7};
\node[xa] at (4.0,0) {8};
\node[xa,color=gray] at (4.5,0) {9};
\node[xa,color=gray] at (5.0,0) {10};
\node[xa,color=gray] at (5.5,0) {11};
\node[xa] at (6.0,0) {12};
\node[xa] at (6.5,0) {13};
\node[xa,color=gray] at (7.0,0) {14};
\node[xa] at (7.5,0) {15};

 \draw [decorate,decoration={brace,amplitude=3pt}]
 (4.0,2.6) -- (6.0,2.6);
 \node[anchor=south] at (5.0,2.7) {$\fgap(\cost)$};

\end{tikzpicture}
\caption{The function $n(\cdot)$, generated by
    $b=3$, $\cost = 10.2$, $\coststar=15$ and
  $A = [1,3,5,5,8,12,13,13,15]$. 
  $\cost_{\crit}(b) = 5$ is the smallest value in $A$ for which there are more than $b=3$ values of at most 5.
  The largest value at most $C$ in $A$ is $\tfloor{\cost} = 8$,
  and $\tceil{\cost} = 12$ is the next value in $A$. We also have $\delta_{\min} = 1$ and $\delta(C) = 4$. 
  Example queries include: $\querylim(C = 4, b = 3) = [4,\infty]$ and $\queryext(C = 4, b = 3) = [5, \infty]$ and $\queryext(C = 9, b = 3) = [1, 8]$.
}\label{fig:ill}
\end{figure}
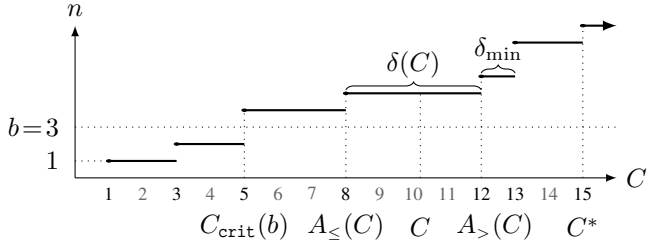

\paragraph{Summary of results.}
In the following sections we describe algorithms for all query models for which the number of node expansions is at most a logarithmic factor more than $\nstar$. The limited feedback model is the most challenging
and is detailed last, while the extended feedback model provides the cleanest illustration of our ideas.
The logarithmic factor depends on $\coststar$ and $\fgapmin$ or $\fgap(\coststar)$.
The theorems are summarized in \cref{tab:theorems}, with precise statements given in the relevant sections.

\begin{table}[ht]
\renewcommand{\arraystretch}{1.6}
\centering
\scalebox{0.95}{
\begin{tabular}{|ll|}
\hline
Limited feedback & $O(Z \log Z), Z = \nstar \log\left(\frac{\coststar}{\delta(\coststar)}\right)$ \\[0.2cm]
Extended feedback & $O\left(\nstar \log\left(\frac{\coststar}{\fgapmin}\right)\right)$ \\[0.2cm] 
Integer feedback & $O\left(\nstar \log(\coststar)\right)$ \\ \hline
\end{tabular}
}
\caption{Number of expansions in the worst case of our algorithms for the different types of feedback.}
\label{tab:theorems}
\end{table}

\paragraph{Overview.} In the next section we implement $\queryext$ for tree search and for graph search (\cref{sec:reductions}). 
We then introduce a variant of exponential search that uses the query function to find the `critical' cost for a given budget (\cref{sec:exp_search}).
Our main algorithm (\ibbs{}) uses the exponential search with a growing budget an optimal solution is found (\cref{sec:ibex}).
The \dovibbs{} algorithm is then provided  to deal with the more general limited feedback setting (\cref{sec:ubs,sec:dovibex}).


\section{Reductions}\label{sec:reductions}
We now explain how to reduce tree search and graph search to the abstract framework
and implement $\queryext$ for these domains.
These query functions will be used in the next sections as part of the main algorithms.
Recall that the number of expansions performed by $\query(\cost, \infty)$ must be at most $n(\cost)$ and $n(\cdot)$ is non-decreasing and that $\nstar = n(\coststar)$.

The list $A$ is composed of the $f$-costs of the nodes encountered
during the search.\todo{T: Something is fishy here. Encountered during what search? This should be the set of all $f$-costs of all paths I guess.} Recall from our problem definition that each node corresponds to a path $\pi$, and that expanding a node corresponds to expanding a state $s = \state(\pi)$.
For a search cost-bounded by $\cost$, let the fringe be all generated nodes with $f(\pi) > \cost$. Also, let the set of visited nodes be the generated nodes such that $f(\pi) \leq \cost$.

\paragraph{Tree search.}
For tree search we implement $\queryext$ in \cref{alg:tree_query}, which is a variant of depth-first search with an $f$-cost limit $\cost$ and expansion budget $b$. The algorithm terminates before exceeding the expansion budget and tracks the smallest $f$-cost observed in the fringe and the largest $f$-cost of any visited node, which are used to implement the extended feedback model.
Thus, for an $f$-cost bound $\cost$,
$n(\cost) = |\{\pi : \max_{s'\in\pi} f(s') \leq \cost\}|$.
When an optimal solution is found, the algorithm throws an exception, which is expected to be caught and handled by the user.
The function $\querylim$ could be implement like $\queryext$ without needing to track
\code{min_fringe} and \code{max_expanded}, but that would throw away valuable information.

\begin{algorithm}[htb!]
\begin{lstlisting}
def $\queryext$($\cost$, budget):
  data.min_fringe = $\infty$ # will be $\lstmathcomment{> C}$
  data.max_visited = 0 # will be $\lstmathcomment{\leq C}$
  data.expanded = 0 # number of expansions
  data.best_path = none # $\lstmathcomment{f(}$none$\lstmathcomment{) = \infty}$
  try:
    DFS($\cost$, budget, $\{\sinit\}$, data)
  catch "budget exceeded":
    return $[1, \texttt{data.max\_visited}]$
  if data.best_path $\neq$ none:  # solution found
    throw data # to be dealt with by the user
  return $[\texttt{data.min\_fringe}, \infty]$
# data: return info, passed by reference
# $\lstmathcomment{\pi}$: path
def DFS($\cost$, budget, $\pi$, data):
  if $f(\pi)$ > $\cost$: 
    data.min_fringe = min(data.min_fringe, $f(\pi)$) 
    return 
  data.max_visited = max(data.max_visited, $f(\pi)$)
  if $f(\pi) \geq f($data.best_path$)$:
    return # branch and bound
  if is_goal($\state(\pi)$):
    data.best_path = $\pi$
    # Here we could throw the solution if its
    # cost is equal to a known lower bound
    return
  if data.expanded == budget:
    throw "budget exceeded"
  data.expanded++
  for $s' \in \succf(\state(\pi))$:
    DFS($\cost$, budget, $\pi+\{s'\}$, data)
\end{lstlisting}

\caption{Query with extended feedback for tree search}
\label{alg:tree_query}
\end{algorithm}

\paragraph{Graph search.}
In graph search, $\queryext$ is implemented in a similar way as \cref{alg:tree_query},
but \code{DFS} is replaced with
\ifsup
\code{graph_search}, which appears in \cref{alg:graph_search} (\cref{sec:graph_search}).
\else 
\code{graph_search}, which appears in
the supplementary material.
\fi
The \code{graph_search} function is equivalent to BFIDA*~\cite{zhou2004structured} with the breadth-first search replaced with  Uniform-Cost Search (UCS)~\cite{russell2009aima,felner2011ucs},
using an $f$-cost limit $\cost$ on the generated nodes and tracking the maximum $f$-cost 
among visited states and the minimum $f$-cost in the fringe.
As in UCS, states are processed in increasing $g$-cost order.
Since $g$ is non-decreasing,
states are not expanded more than once in each
\ifsup call to \cref{alg:graph_search}.
\else query.
\fi
Therefore the number of expansions is at most
$n(\cost) = |\{s: \min_{\pi:\state(\pi)=s}\max_{s'\in\pi}f(s') \leq \cost\}|$ and the number of expansions made by $\query(\coststar, \infty)$ is at most $n(\coststar) = \nstargraph$ as required.

\section{Exponential Search}\label{sec:exp_search}

\begin{algorithm}[htb!]
\begin{lstlisting}[escapeinside={(*}{*)}]
def exp_search(start, b, query):
  low = start
  high = $\infty$
  loop:
    if high == $\infty$:
      C = 2$\times$low  # exponential phase (*\label{line:bracket}*)
    else:
      C = (low + high) / 2 # binary phase
    [low, high] = [low, high] $\cap\,$query$(C, b)$
  until low == high  
  return low
\end{lstlisting}
\caption{Exponential search with budgeted queries}
\label{alg:es}
\end{algorithm}

With the reductions out of the way, we now introduce a budgeted variant of 
exponential search \cite{bentley1976expsearch}, which is closely related to the bracketed bisection method \cite[\S9]{FPT92}.

\cref{alg:es} accepts as input a budget $b$, an initial cost limit $\texttt{start} \leq \cost_{\crit}(b)$ and a function 
$\query \in \{\queryext, \queryint, \querylim\}$.
The algorithm starts by setting $\texttt{low} = \texttt{start}$ and initiates an exponential phase where \texttt{low} is repeatedly doubled until $\query(2 \times \texttt{low}, b)$ has insufficient budget.
The algorithm then sets $\texttt{high} = 2 \times \texttt{low}$ and performs a binary search on the interval $[\texttt{low}, \texttt{high}]$ until $\texttt{low} = \texttt{high}$. 
See \cref{fig:rv} for an illustration.


The discrete structure in the integer and extended feedback models ensures that the algorithm halts after at most logarithmically many queries and returns $\cost_{\crit}(b)$. In the limited feedback model the algorithm generally does not halt, but will make a terminating query
if $b \geq \nstar$. These properties are summarized in the next two propositions, which use the following definition:
\begin{align*}
    \nbisec(\eps, x, \Delta) &=
    1+\ceil{\log_2\left(\frac{x}{\eps}\right)}_{\geq 1} + \floor{\log_2\left(\frac{x}{\Delta}\right)}_{\geq 0}\,.
\end{align*}
This is an upper bound on the number of calls to query  needed when starting at $\texttt{start} = \eps$, finding a upper bound $\texttt{high} \geq x$
and then reducing the interval $[\texttt{low},\texttt{high}]$ to a size at most $\Delta$ (leading to a query within that interval).
Recall that making a query with cost limit $C$ and expansion budget $b$ will find an optimal solution  if $\nstar \leq n(\cost) \leq b$.

\begin{proposition}\label{prop:es-game-end}
Suppose $b \geq \nstar$. Then for any feedback model \cref{alg:es} makes a query that terminates the interaction
after at most
$\nbisec(\texttt{start},\ \ \coststar,\ \  \cost_{\crit}(b) - \coststar)$ calls to query.
\end{proposition}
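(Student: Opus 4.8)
The plan is to track the evolution of the interval $[\texttt{low},\texttt{high}]$ maintained by \cref{alg:es} and argue that within $\nbisec(\texttt{start}, \coststar, \cost_{\crit}(b)-\coststar)$ queries the algorithm is forced to make a call $\query(C,b)$ with $\coststar \le C$. Since $b \ge \nstar$ and $n(\cdot)$ is non-decreasing, any such call satisfies $\nstar = n(\coststar) \le n(C)$, and also $n(C) \le b$ provided $C < \cost_{\crit}(b)$ (by definition of $\cost_{\crit}$); hence $\nstar \le n(C) \le b$ and the query throws an optimal solution, terminating the interaction. The bulk of the work is showing that a call with $C \in [\coststar, \cost_{\crit}(b))$ is reached within the stated budget. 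First I would observe the loop invariant that $\texttt{low} \le \cost_{\crit}(b) \le \texttt{high}$ is maintained: this holds initially because $\texttt{start} \le \cost_{\crit}(b)$ and $\texttt{high}=\infty$, and it is preserved by intersecting with $\query(C,b)$, since in every feedback model the returned interval contains $\cost_{\crit}(b)$ (for limited feedback this is $[\cost,\infty]$ when $\cost < \cost_{\crit}(b)$ and $[1,\cost]$ otherwise, and similarly, only tighter, for the other two). In particular $\texttt{low}$ is always a valid lower endpoint and, once $\texttt{high}$ becomes finite, $\texttt{high} \ge \cost_{\crit}(b)$.

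Next I would split the count of queries into the exponential phase and the binary phase, matching the three summands in the definition of $\nbisec(\eps, x, \Delta)$. In the exponential phase $\texttt{low}$ starts at $\texttt{start}$ and doubles after each query that reports "budget sufficient"; after at most $\ceil{\log_2(\coststar/\texttt{start})}_{\ge 1}$ such doublings we have $2\times\texttt{low} \ge \coststar$, so at that point the algorithm issues the query $\query(C, b)$ with $C = 2\times\texttt{low} \ge \coststar$. If this $C$ also satisfies $C < \cost_{\crit}(b)$, the query terminates the interaction and we are done within $1 + \ceil{\log_2(\coststar/\texttt{start})}_{\ge 1}$ queries (the leading $1$ accounting for this final query), which is already no more than $\nbisec(\texttt{start}, \coststar, \cost_{\crit}(b)-\coststar)$. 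Otherwise $C \ge \cost_{\crit}(b)$, the query reports "budget exceeded", $\texttt{high}$ is set to $C$, and we enter the binary phase with $\coststar \le \texttt{low}$ not yet guaranteed but $\texttt{low} < \cost_{\crit}(b) \le \texttt{high} = C < 2\coststar$ — actually I would be slightly careful here: the first finite $\texttt{high}$ equals the last $C$ tried, which is $2\times\texttt{low}_{\text{prev}}$ and could be as large as $2\coststar$ only if the previous $\texttt{low}_{\text{prev}} < \coststar$; in any case $\texttt{high} - \texttt{low}$ at the start of the binary phase is at most some $x$ with $x \ge \coststar$, matching the role of the second argument of $\nbisec$.

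In the binary phase, each query halves the length of $[\texttt{low},\texttt{high}]$ (the midpoint is tested and the interval containing $\cost_{\crit}(b)$ is kept), so after $\floor{\log_2(x/\Delta)}_{\ge 0}$ halvings the interval has length at most $\Delta := \cost_{\crit}(b) - \coststar$. I would then argue that as soon as $\texttt{high} - \texttt{low} \le \cost_{\crit}(b) - \coststar$ while $\texttt{low} \le \cost_{\crit}(b) \le \texttt{high}$, the interval $[\texttt{low}, \texttt{high}]$ is contained in $[\coststar, \infty)$ — more precisely $\texttt{low} \ge \texttt{high} - \Delta \ge \cost_{\crit}(b) - \Delta = \coststar$ — so the next midpoint query (or in the integer/extended models, the very query that shrinks the interval to this size) uses a cost limit $C \ge \texttt{low} \ge \coststar$; and since $\texttt{low} < \cost_{\crit}(b)$ still (the algorithm has not yet halted) we also get $C < \cost_{\crit}(b)$ when $C$ is the midpoint, so this query reports a solution and terminates. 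Summing the three contributions — the final terminating query, the exponential doublings, and the binary halvings — gives exactly $\nbisec(\texttt{start}, \coststar, \cost_{\crit}(b)-\coststar)$ as claimed.

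The step I expect to be the main obstacle is the careful bookkeeping at the boundary between the phases and at termination: pinning down exactly which query is the "terminating" one, ensuring the off-by-one counts line up with the $+1$ and the $\ceil{\cdot}_{\ge 1}$ / $\floor{\cdot}_{\ge 0}$ clippings in the definition of $\nbisec$, and verifying that these arguments go through \emph{uniformly} across the limited, integer, and extended feedback models — in particular that the weakest model (limited feedback, whose "budget sufficient" response is only $[\cost,\infty]$) still yields the claimed bound, which it does because the only properties used are that the returned interval contains $\cost_{\crit}(b)$ and that its lower endpoint is $\cost$ when the budget is exceeded.
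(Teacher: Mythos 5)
Your overall strategy (exponential doubling, then binary halving, with the query count charged to the three summands of $\nbisec$) is the same as the paper's, but there is a genuine gap at the decisive step of the binary phase. You argue that once $\texttt{high}-\texttt{low}\le\Delta$ with $\Delta=\cost_{\crit}(b)-\coststar$ the next midpoint $C$ satisfies $C\ge\coststar$, and you then assert $C<\cost_{\crit}(b)$ ``since $\texttt{low}<\cost_{\crit}(b)$ still''. That inference is invalid: knowing only $\texttt{low}<\cost_{\crit}(b)\le\texttt{high}$, the midpoint of $[\texttt{low},\texttt{high}]$ may well lie at or above $\cost_{\crit}(b)$, in which case the query answers ``budget exceeded'' and does not terminate. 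The invariant you record, $\texttt{low}\le\cost_{\crit}(b)\le\texttt{high}$, is too weak to close the argument. What the paper uses instead is that \emph{until a terminating query is made} one also has $\texttt{low}\le\coststar$: the value $\texttt{low}$ is only raised by a ``budget sufficient'' response, i.e.\ a query with $C<\cost_{\crit}(b)$, and if such a query also had $C\ge\coststar$ it would already have thrown the optimal solution (and when $C<\coststar$, the returned lower endpoint is at most $\coststar$ in every feedback model, e.g.\ $\tceil{C}\le\coststar$ in the extended model since $\coststar\in A$). With $\texttt{low}\le\coststar$ and $\texttt{high}\ge\cost_{\crit}(b)$ in hand, as soon as $\texttt{high}-\texttt{low}<2\bigl(\cost_{\crit}(b)-\coststar\bigr)$ the midpoint satisfies $\coststar<C<\cost_{\crit}(b)$ and that query terminates; no claim of the form $\texttt{low}\ge\coststar$ is needed.

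The same missing invariant also affects your bookkeeping at the phase boundary: you allow the first finite $\texttt{high}$ to be as large as roughly $2\coststar$, so the binary phase may start from an interval of length up to $2\coststar$, and with your stricter target length $\le\Delta$ this can cost one or two more halvings than the third summand $\floor{\log_2(\coststar/\Delta)}_{\geq 0}$ of $\nbisec$ permits. The invariant fixes this as well: at the moment $\texttt{high}$ becomes finite we have $\texttt{high}=2\,\texttt{low}$, hence $\texttt{high}-\texttt{low}=\texttt{low}\le\coststar$, and the count of halvings (plus the one terminating query) then matches the claimed bound exactly.
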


\begin{proposition}\label{prop:es-nobudget}
Suppose $b < \nstar$. Then, for the extended feedback model, \cref{alg:es} returns $\cost_{\crit}(b)$ with at most
$\nbisec(\texttt{start},\ \ \cost_{\crit}(b),\ \  \fgapmin)$ queries.
\end{proposition}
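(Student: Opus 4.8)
The plan is to track how the interval $[\texttt{low},\texttt{high}]$ evolves across the two phases of \cref{alg:es} under the extended feedback model, counting queries in each phase separately. First I would establish the invariant that after every query the interval $[\texttt{low},\texttt{high}]$ still contains $\cost_{\crit}(b)$: this follows because $\queryext(C,b)$ always returns an interval containing $\cost_{\crit}(b)$ by definition of the feedback model, and the algorithm only ever intersects the current interval with such a response. The hypothesis $\texttt{start} \le \cost_{\crit}(b)$ (guaranteed by the caller, and preserved since $\texttt{low}$ only ever increases to values shown to have sufficient budget) gives the base case. Since $b < \nstar$, no query can terminate the interaction (termination requires $\nstar \le n(C) \le b$), so the algorithm necessarily runs until $\texttt{low}=\texttt{high}$, at which point by the invariant it returns $\cost_{\crit}(b)$. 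What remains is the query count.

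For the exponential phase: $\texttt{low}$ starts at $\texttt{start} = \eps$ and is doubled each time the query reports sufficient budget, i.e.\ as long as $2\times\texttt{low} < \cost_{\crit}(b)$. Hence the phase runs at most $\ceil{\log_2(\cost_{\crit}(b)/\eps)}_{\ge 1}$ doublings before a query with $C = 2\times\texttt{low} \ge \cost_{\crit}(b)$ reports budget exceeded; that final query both ends the exponential phase and sets $\texttt{high}$, and by the extended-feedback response $[1,v]$ with $v \in [\cost_{\crit}(b), C]\cap A$ we get $\texttt{high} = v \ge \cost_{\crit}(b)$ and $\texttt{high} < \infty$. This accounts for the $1 + \ceil{\log_2(\cost_{\crit}(b)/\eps)}_{\ge 1}$ term. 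For the binary phase, I would show the key point is that the interval width must shrink below $\fgapmin$ in at most $\floor{\log_2(\cost_{\crit}(b)/\fgapmin)}_{\ge 0}$ further queries: each binary query with midpoint $C$ either returns $[\tceil{C},\infty]$ (if budget sufficient) or $[1,v]$ with $v \le C$ (if exceeded), and in both cases intersecting with $[\texttt{low},\texttt{high}]$ at least halves the width — in the sufficient case because $\cost_{\crit}(b) > C \ge \texttt{low}$ and the new lower bound $\tceil{C} > C$ is at least the midpoint, in the exceeded case because the new upper bound is at most $C$. The initial width is at most $\texttt{high} \le \cost_{\crit}(b)$ (using $\texttt{low} \ge \eps \ge 1 > 0$ would give $\texttt{high}-\texttt{low} < \texttt{high}$; more carefully $\texttt{high} = v \le 2\times\texttt{low} \le 2\cost_{\crit}(b)$, and one can be slightly more generous or absorb the factor), so after $k$ binary queries the width is at most $\cost_{\crit}(b)/2^{k-1}$ or so.

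The final step is to argue that once the interval width drops to at most $\fgapmin$, the very next query makes $\texttt{low}=\texttt{high}$: since $[\texttt{low},\texttt{high}]$ contains $\cost_{\crit}(b) \in A$ and $\delta(\cost_{\crit}(b)) \ge \fgapmin$ means the neighbouring elements of $A$ around $\cost_{\crit}(b)$ are at distance $\ge \fgapmin$, an interval of width $\le \fgapmin$ containing $\cost_{\crit}(b)$ can contain no other element of $A$ strictly between $\tfloor{\texttt{low}}$-type neighbours; a query at the midpoint then returns either $[\tceil{C},\infty]$ with $\tceil{C} = \cost_{\crit}(b) = \texttt{high}$ (collapsing the interval from below) or $[1,v]$ with $v = \cost_{\crit}(b) = \texttt{low}$ (collapsing from above). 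Summing the exponential-phase and binary-phase counts, plus this one terminal query, yields the bound $\nbisec(\texttt{start}, \cost_{\crit}(b), \fgapmin)$.

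The main obstacle I anticipate is the bookkeeping around the $\tceil{\cdot}$ and $\tfloor{\cdot}$ operators and the exact definition of $\fgapmin$: one has to be careful that $\fgapmin = \min\{\delta(C) : C \le \coststar,\, C \in A\}$ is indeed a valid lower bound on the gap around $\cost_{\crit}(b)$ — this requires $\cost_{\crit}(b) \le \coststar$, which holds precisely because $b < \nstar = n(\coststar)$ forces $\cost_{\crit}(b) = \min\{v \in A : n(v) > b\} \le \coststar$. Getting the $\ceil{\cdot}_{\ge 1}$ and $\floor{\cdot}_{\ge 0}$ rounding conventions to line up exactly with the stated $\nbisec$ expression, rather than being off by one, is the fiddly part; the conceptual content (exponential phase costs a log, binary phase costs a log, plus $O(1)$) is straightforward.
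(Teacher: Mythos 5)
Your overall structure (the invariant that the current interval always contains $\cost_{\crit}(b)$, correctness of the returned value because no query can terminate the interaction when $b<\nstar$, a logarithmic count for each phase, and the observation that $\cost_{\crit}(b)\le\coststar$ so that the gaps of $A$ around $\cost_{\crit}(b)$ are at least $\fgapmin$) matches the paper's proof. The gap is in the endgame and the resulting arithmetic: your accounting spends the single ``$+1$'' available in $\nbisec$ twice. You charge the exponential phase $1+\ceil{\log_2(\cost_{\crit}(b)/\texttt{start})}_{\ge 1}$ queries (the doublings plus the budget-exceeded query), then $\floor{\log_2(\cost_{\crit}(b)/\fgapmin)}_{\ge 0}$ binary queries to shrink the width, then one further ``terminal'' query to collapse the interval; this sums to $\nbisec(\texttt{start},\cost_{\crit}(b),\fgapmin)+1$. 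Moreover the middle count is itself off by one: at the start of the binary phase the width can be essentially $\cost_{\crit}(b)$ (the correct bound is $\texttt{high}-\texttt{low}\le\texttt{low}\le\cost_{\crit}(b)$, coming from $\texttt{high}\le 2\,\texttt{low}$; your statement that the initial width is at most $\texttt{high}\le\cost_{\crit}(b)$ cannot be right, since $\texttt{high}\ge\cost_{\crit}(b)$), so reaching width at most $\fgapmin$ can take $\ceil{\log_2(\cost_{\crit}(b)/\fgapmin)}=\floor{\log_2(\cost_{\crit}(b)/\fgapmin)}+1$ halvings when the ratio is not a power of two.

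The missing idea, which is exactly how the paper closes the count, is that under extended feedback no terminal query is needed at all: every update sets $\texttt{low}$ to some $\tceil{\cost}\in A$ and $\texttt{high}$ to some $v\in A$, so with $x<y<z$ the consecutive elements of $A$ around $y=\cost_{\crit}(b)$, the conditions $\texttt{high}-\texttt{low}<\fgapmin$ and $\texttt{low}\le y\le\texttt{high}$ force $x<\texttt{low}$ and $\texttt{high}<z$, hence $\texttt{low}=\texttt{high}=y$ and the loop exits the moment the width drops below $\fgapmin$ (the case $\texttt{low}=\texttt{start}\notin A$ needs a one-line remark, as in the paper). With that, the exponential phase, including its final budget-exceeded query, costs only $\ceil{\log_2(\cost_{\crit}(b)/\texttt{start})}_{\ge 1}$, the binary phase costs at most $1+\floor{\log_2(\cost_{\crit}(b)/\fgapmin)}_{\ge 0}$, and the total is exactly $\nbisec(\texttt{start},\cost_{\crit}(b),\fgapmin)$. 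As written, your argument only yields $\nbisec+1$ (or $+2$ once the shrink count is corrected), i.e.\ a slightly weaker statement than the proposition; the grid-snapping observation is what recovers the stated constant, and it is precisely where the rounding issue you flagged as ``fiddly'' actually bites.
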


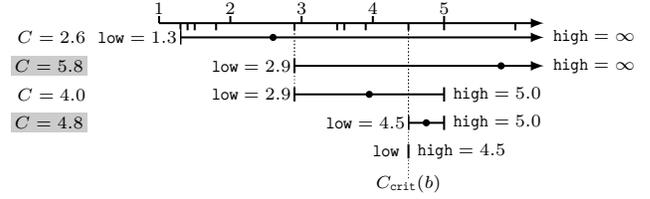
\begin{figure}[htb!]
    \centering
    \resizebox{\columnwidth}{!}{
    \begin{tikzpicture}[thick,font=\scriptsize]
    \draw[densely dotted,thin] (4.500000, 0.000000) -- (4.500000, -2.200000);
\draw[] (6.000000, 0.000000) -- (6.000000, -0.100000);
\draw[] (5.000000, 0.000000) -- (5.000000, -0.100000);
\draw[] (4.500000, 0.000000) -- (4.500000, -0.100000);
\draw[] (3.900000, 0.000000) -- (3.900000, -0.100000);
\draw[] (3.600000, 0.000000) -- (3.600000, -0.100000);
\draw[] (3.500000, 0.000000) -- (3.500000, -0.100000);
\draw[] (2.900000, 0.000000) -- (2.900000, -0.100000);
\draw[] (1.300000, 0.000000) -- (1.300000, -0.100000);
\draw[] (1.800000, 0.000000) -- (1.800000, -0.100000);
\draw[] (1.500000, 0.000000) -- (1.500000, -0.100000);
\draw[] (1.400000, 0.000000) -- (1.400000, -0.100000);
\draw[] (1.300000, -0.100000) -- (1.300000, -0.300000);
\node[anchor=east,fill=white,inner sep=1pt] at (1.300000, -0.200000) { $\texttt{low} = 1.3$ };
\draw[-latex] (1.300000, -0.200000) -- (6.400000, -0.200000);
\node[anchor=west] at (6.400000, -0.200000) { $\texttt{high} = \infty$ };
\draw[fill=black,draw=none] (2.600000, -0.200000) circle (0.050000);
\node[anchor=east,inner sep=0.5pt] at (0.000000, -0.200000) { $C = 2.6$ };
\draw[densely dotted, thin] (2.900000, -0.500000) -- (2.900000, -0.100000);
\draw[] (2.900000, -0.500000) -- (2.900000, -0.700000);
\node[anchor=east,fill=white,inner sep=1pt] at (2.900000, -0.600000) { $\texttt{low} = 2.9$ };
\draw[-latex] (2.900000, -0.600000) -- (6.400000, -0.600000);
\node[anchor=west] at (6.400000, -0.600000) { $\texttt{high} = \infty$ };
\draw[fill=black,draw=none] (5.800000, -0.600000) circle (0.050000);
\node[anchor=east,fill=black!20!white,inner sep=1.5pt] at (0.000000, -0.600000) { $C = 5.8$ };
\draw[densely dotted, thin] (2.900000, -0.900000) -- (2.900000, -0.100000);
\draw[] (2.900000, -0.900000) -- (2.900000, -1.100000);
\node[anchor=east,fill=white,inner sep=1pt] at (2.900000, -1.000000) { $\texttt{low} = 2.9$ };
\draw[] (5.000000, -0.900000) -- (5.000000, -1.100000);
\node[anchor=west] at (5.000000, -1.000000) { $\texttt{high} = 5.0$ };
\draw[] (2.900000, -1.000000) -- (5.000000, -1.000000);
\draw[fill=black,draw=none] (3.950000, -1.000000) circle (0.050000);
\node[anchor=east,inner sep=0.5pt] at (0.000000, -1.000000) { $C = 4.0$ };
\draw[densely dotted, thin] (4.500000, -1.300000) -- (4.500000, -0.100000);
\draw[] (4.500000, -1.300000) -- (4.500000, -1.500000);
\node[anchor=east,fill=white,inner sep=1pt] at (4.500000, -1.400000) { $\texttt{low} = 4.5$ };
\draw[] (5.000000, -1.300000) -- (5.000000, -1.500000);
\node[anchor=west] at (5.000000, -1.400000) { $\texttt{high} = 5.0$ };
\draw[] (4.500000, -1.400000) -- (5.000000, -1.400000);
\draw[fill=black,draw=none] (4.750000, -1.400000) circle (0.050000);
\node[anchor=east,fill=black!20!white,inner sep=1.5pt] at (0.000000, -1.400000) { $C = 4.8$ };
\node[anchor=east,fill=white] at (4.500000, -1.800000) { $\texttt{low}$ };
\node[anchor=west] at (4.500000, -1.800000) { $\texttt{high} = 4.5$ };
\draw[] (4.500000, -1.700000) -- (4.500000, -1.900000);
\draw[-latex] (1.000000, 0.000000) -- (6.400000, 0.000000);
\node[anchor=north] at (4.500000, -2.000000) { $\scriptsize C_{{\texttt{{crit}}}}(b)$ };
\draw[] (1.000000, 0.000000) -- (1.000000, 0.100000);
\node[anchor=south] at (1.000000, 0.000000) { $1$ };
\draw[] (2.000000, 0.000000) -- (2.000000, 0.100000);
\node[anchor=south] at (2.000000, 0.000000) { $2$ };
\draw[] (3.000000, 0.000000) -- (3.000000, 0.100000);
\node[anchor=south] at (3.000000, 0.000000) { $3$ };
\draw[] (4.000000, 0.000000) -- (4.000000, 0.100000);
\node[anchor=south] at (4.000000, 0.000000) { $4$ };
\draw[] (5.000000, 0.000000) -- (5.000000, 0.100000);
\node[anchor=south] at (5.000000, 0.000000) { $5$ };
    \end{tikzpicture}
    }
    \caption{The four queries made by \bisection{} with the extended feedback model and $\texttt{start} = 1.3$ and budget $\texttt{b} = 7$. The ticks below the $x$-axis indicate the elements of $A = [1.4, 1.5, 1.8, 2.3, 2.9, 3.5,3.6,3.9,4.5,5,6]$. The small circles are the values of $\cost$ in each query. In the first call to $\queryext$ the budget was sufficient and so $\texttt{low}$ is set to $A_{>}(C) = 2.9$, which is doubled to produce the next query. In the second call to query, $C = 5.8$ leads to an insufficient budget and then $\texttt{high}$ is set to $5.0$. In the third query $\costlow$ is increased to 4.5, which is $\cost_{\crit}(b)$. In the fourth query the budget is insufficient and the algorithm halts.}
    \label{fig:rv}
\end{figure}
 
\section{\ibbslong}\label{sec:ibex}

The \ibbslong{} (\ibbs) algorithm uses the extended query model, which is available in our applications to tree and graph search. 
\cref{alg:rv} initializes a lower bound on the optimal cost with the lowest value in the array $C_1 = \min A$; we now denote this quantity by $\cost_{\min}$. It subsequently operates in iterations $k \in \Nonnegints$.
In iteration $k$ it sets the budget to $b_k = 2^k$ and 
calls $C_{k+1} = \bisection(C_k, b_k,\queryext)$
to obtain a better lower bound.

\begin{algorithm}[b!]
\begin{lstlisting}[escapeinside={(*}{*)}]
def IBEX():  # simple version
  $\cost_1$ = $\cost_{\min}$
  for k = 1,2,...
    $b_k$ = $2^k$
    $\cost_{k+1}$ = exp_search($\cost_k$, $b_k$, $\queryext$)
\end{lstlisting}
\caption{Iterative Budgeted Exponential Search}
\label{alg:rv}
\end{algorithm}

\begin{theorem}\label{thm:rv}
The number of expansions made by \cref{alg:rv} is at most
\begin{align*}
    4\nstar\nbisec({\cost_{\min}, \coststar, \fgapmin}) = 
    O\left(\nstar\log\left(\frac{\coststar}{\fgapmin} \right)\right)\,.
\end{align*}
\end{theorem}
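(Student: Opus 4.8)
Throughout write $N \defeq \nbisec(\costmin, \coststar, \fgapmin)$, and recall that $b_k = 2^k$, $\cost_1 = \costmin = \min A$, and that every call $\queryext(\cost,b)$ performs at most $\min\{b, n(\cost)\} \le b$ expansions. The plan has three steps: identify the iteration $k^*$ in which \cref{alg:rv} halts and show $2^{k^*} \le 2\nstar$; bound the number of queries issued during each iteration $k \le k^*$ by $N$; and sum the geometric series $\sum_{k\le k^*} N b_k$. For the first step, let $k^* = \max\{1, \ceil{\log_2\nstar}\}$, the first iteration with $b_{k^*} \ge \nstar$. I would prove by induction on $k \le k^*$ that the call $\bisection(\cost_k, b_k, \queryext)$ has its precondition $\cost_k \le \costcrit(b_k)$ satisfied and, when $k < k^*$, returns $\cost_{k+1} = \costcrit(b_k) \le \coststar$. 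The base case uses $\costcrit(b_1) \in A$ together with $\costmin = \min A$; the step uses $\cost_k = \costcrit(b_{k-1})$, which is $\le \costcrit(b_k)$ since $b \mapsto \costcrit(b) = \min\{v \in A : n(v) > b\}$ is non-decreasing (the defining set only shrinks as $b$ grows), and $\le \coststar$ since $b_{k-1} < \nstar = n(\coststar)$. For $k < k^*$ we have $b_k < \nstar$, so \cref{prop:es-nobudget} gives the return value and a bound of $\nbisec(\cost_k, \costcrit(b_k), \fgapmin)$ queries, and the interaction cannot terminate in such an iteration, since termination would require a query with $\nstar \le n(\cost) \le b_k < \nstar$. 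For $k = k^*$ we have $b_{k^*} \ge \nstar$ and $\cost_{k^*} \le \coststar < \costcrit(b_{k^*})$ (the first inequality as in the induction, or $\cost_1 = \costmin \le \coststar$ when $k^*=1$; the last because $n(\coststar) = \nstar \le b_{k^*}$), so \cref{prop:es-game-end} applies: the search makes a terminating query after at most $\nbisec(\cost_{k^*}, \coststar, \costcrit(b_{k^*}) - \coststar)$ queries and \cref{alg:rv} stops.

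For the per-iteration bound I would use that $\nbisec(\eps, x, \Delta)$ is non-increasing in $\eps$, non-decreasing in $x$, and non-increasing in $\Delta$. For every $k \le k^*$ we have $\cost_k \ge \costmin$ (as $\cost_1 = \costmin$ and $\cost_k \in A$ for $k \ge 2$). When $k < k^*$, also $\costcrit(b_k) \le \coststar$, so $\nbisec(\cost_k, \costcrit(b_k), \fgapmin) \le N$. When $k = k^*$, $b_{k^*} \ge n(\coststar)$ forces $\costcrit(b_{k^*}) > \coststar$, hence $\costcrit(b_{k^*}) \ge \tceil{\coststar}$, and since $\coststar \in A$ gives $\tfloor{\coststar} = \coststar$ we obtain $\costcrit(b_{k^*}) - \coststar \ge \tceil{\coststar} - \tfloor{\coststar} = \delta(\coststar) \ge \fgapmin$, the last inequality because $\delta(\coststar)$ is one of the terms in the minimum defining $\fgapmin$; therefore $\nbisec(\cost_{k^*}, \coststar, \costcrit(b_{k^*}) - \coststar) \le N$. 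So every iteration $k \le k^*$ issues at most $N$ queries.

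Combining the two bounds, iteration $k$ makes at most $N b_k = N 2^k$ expansions, so the total is at most $N \sum_{k=1}^{k^*} 2^k = N(2^{k^*+1} - 2) < N\,2^{k^*+1} \le 4 N \nstar$, where $2^{k^*} \le 2\nstar$ holds because $k^* = \ceil{\log_2\nstar}$ for $\nstar \ge 2$ and $k^* = 1$ when $\nstar = 1$. This is the claimed $4\nstar\,\nbisec(\costmin, \coststar, \fgapmin)$; since $\costmin \ge 1$ we have $\log_2(\coststar/\costmin) \le \log_2\coststar$, so $N = O(\log(\coststar/\fgapmin))$ and the asymptotic form follows. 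The one delicate point is making the per-iteration query count uniform in $k$: it rests on the monotonicity of $\costcrit$ and $\nbisec$ and on the final-gap estimate $\costcrit(b_{k^*}) - \coststar \ge \fgapmin$; everything else is the geometric-series computation above.
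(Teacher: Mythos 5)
Your proof is correct and follows essentially the same route as the paper's: split into iterations $k<k_*$ (bounded via \cref{prop:es-nobudget}) and the terminating iteration $k_*$ (bounded via \cref{prop:es-game-end}), show each query count is at most $\nbisec(\costmin,\coststar,\fgapmin)$, and sum the geometric budget series to get $4\nstar\nbisec(\costmin,\coststar,\fgapmin)$. The additional details you supply (monotonicity of $\costcrit$ and $\nbisec$, the estimate $\costcrit(b_{k_*})-\coststar\geq\delta(\coststar)\geq\fgapmin$, and the precondition check for \bisection) are exactly the steps the paper leaves implicit.
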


\begin{proof}
Define $r_1 = \nbisec(\costmin, \coststar, \fgapmin)$.
Let $k_*=\ceil{\log_2 \nstar}$ be the first iteration $k$ for which $b_k \geq \nstar$.
\cref{prop:es-nobudget} shows that
for iterations $k < k_*$ the number of queries performed in the call to \bisection{} is at most
$\nbisec(\cost_k, \cost_{\crit}(b_k), \fgapmin)
\leq r_1$.
\cref{prop:es-game-end} shows that the game ends during iteration $k_*$ after a number of queries bounded by $\nbisec(\cost_{k_*}, \coststar, \cost_{\crit}(b_{k_*})- \coststar) \leq r_1$.
Since each call to query with budget $b_k$ expands at most $b_k=2^k$ nodes,
the total number of expansions is bounded by
    $\sum_{k=1}^{k_*} 2^k r_1 \leq 2^{k_*+1} r_1 \leq 2^{2+\log_2 \nstar}r_1 
    = 4\nstar r_1$.
\end{proof}

\begin{remark}\label{rem:rv}
\cref{alg:rv} also works when 
$\queryext$ is replaced by $\queryint$,
and now $\fgapmin=1$.
\end{remark}

When used for graph search, we call the resulting \ibbs{} variant
\ibbsgslong{} (\ibbsgs), 
and for tree search \ibbstslong{} (\ibbsts).

\begin{remark}
Observe that if \code{DFS} or \code{graph_search} are called with $\texttt{budget} \geq n(\cost) \geq \nstar$,
it throws an optimal solution.
Since the state space is finite, both \ibbsts{} and \ibbsgs{} return an optimal solution if one exists. If no solution exists, \ibbsgs{} will exhaust the graph and return ``no solution'', but \ibbsts{} may run forever unless additional duplicate detection is performed.
\end{remark}

\section{Uniform Budgeted Scheduler}\label{sec:ubs}
\newcommand{\seg}{r} 
\newcommand{\segalt}{m}

While IBEX handles the integer and extended feedback models, it cannot handle the limited feedback model. This is handled by a new algorithm, presented in the next section, using a finely balanced dovetailing idea that we call the \ubslong{} (\ubs, see \cref{alg:ubs}) and takes inspiration from \citeauthor{luby1993speedup} (\citeyear{luby1993speedup}) speedup algorithm.
UBS runs a growing and unbounded number of programs in a dovetailing fashion,
for varying segments of \emph{steps}.
The notion of \emph{step} is to be defined by the user; in heuristic search we take it to be a single node expansion.
During one segment, the selected program
can make arbitrary computations but must use no more steps than its current budget.
Program $k$ halts when it reaches exactly $\tau_k$ steps, which may be infinite. 
UBS maintains a priority queue of pairs (program index $k$, segment number $r$), initialized with $(1, 1)$ and ordered by a function $T : \Naturals \times \Nonnegints \to \Nonnegints$ with $T(k, r) < T(k, r+1)$ and $T(k,r) \leq T(k+1,r)$ for all $(k,r)$ and $T(k, 0) = 0$ for all $k$.
In each iteration \ubs{} removes the $T$-minimal element $(k, r)$ from the front of the queue and calls \code{run_prog}$(k, b)$
with $b=T(k, r) - T(k, r-1)$,
which means that program $k$ is being run for its $r$th segment with a budget of $b$ steps, leading for program $k$ to a total of at most $T(k, r)$ steps over the $r$ segments.
If this does not cause program $k$ to halt, then $(k, r+1)$ is added to the priority queue. Finally, if $r = 1$, then $(k+1,1)$ is added too.
The function \code{run_prog} is defined by the user and may store and restore the state of program $k$ as well as allow access to a shared memory space.

\begin{algorithm}[tb!]
\begin{lstlisting}
def UBS($T$, run_prog):
  q = make_priority_queue($T$)
  q.insert((1, 1))  # k=1, r=1
  while not q.empty():
    # Remove prog of minimum T cost
    $(k, r)$ = q.extract_min()
    budget = $T(k,r) - T(k,r-1)$
    if run_prog($k$, budget) != "halted":
      q.insert(($k, r+1$))
    if $r$ == 1:
      q.insert(($k+1, 1$))
  return none
\end{lstlisting}
\caption{\ubslong}
\label{alg:ubs}
\end{algorithm}

The monotonicity assumptions on $T$ mean that UBS is essentially 
executing program/segment pairs $(k, r)$ according to a Uniform Cost Search
where a pair $(k, r)$ is a node in the search tree
with cost $T(k, r)$ (\cref{fig:ubs-tree}).
In this sense \ubs{} tries (asymptotically) to maintain a uniform amount of steps used among all non-halting programs.

\begin{figure}[b!]
\centering
\begin{tikzpicture}[font=\scriptsize,yscale=0.8]
\tikzstyle{n} = [inner sep=1pt]
\node[n] (11) at (1,0) {$(k = 1,r = 1)$};
\node[n] (12) at (3.8,0) {$(k = 1,r = 2)$};
\node[n] (13) at (6.6,0) {$(k = 1,r = 3)$};
\node[n] (14) at (8.4,0) {};
\node[n] (21) at (1,-0.8) {$(k = 2,r = 1)$};
\node[n] (22) at (3.8,-0.8) {$(k = 2,r = 2)$};
\node[n] (23) at (5.5,-0.8) {halts};
\node[n] (31) at (1,-1.6) {$(k = 3,r = 1)$};
\node[n] (32) at (3.8,-1.6) {$(k = 3,r = 2)$};
\node[n] (33) at (6.6,-1.6) {$(k = 3,r = 3)$};
\node[n] (34) at (8.4,-1.6) {};
\draw[-latex] (11) -- (12);
\draw[-latex] (12) -- (13);
\draw[-latex] (13) -- (14);
\draw[-latex] (11) -- (21);
\draw[-latex] (21) -- (22);
\draw[-latex] (21) -- (31);
\draw[-latex] (31) -- (32);
\draw[-latex] (32) -- (33);
\draw[-latex] (33) -- (34);
\draw[-latex] (31) -- (1,-2.15);
\end{tikzpicture}
\caption{An example tree used by \ubs. Program $k = 2$ halts after  $\tau_2 \leq T(2,2)$ steps.}\label{fig:ubs-tree}
\end{figure}
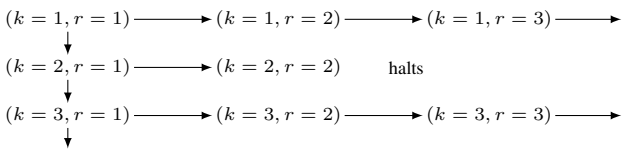

Let $T_{\ubs}(k, \seg)$ be the number of steps used by \ubs{} after executing $(k, \seg)$.
$T_j(k, \seg) = \max\{T(j, m) : T(j, m) \leq T(k, \seg), m \in \Nonnegints\}$ is an upper bound on the number of steps used by program $j$ when \ubs{}  executes $(k, \seg)$. 

\begin{theorem}\label{thm:ubs}
For any $(k, \seg)$ with $T(k, \seg-1) < \tau_k$,
\ifsup
\begin{align*}
T_{\ubs}(k, \seg) &\leq \sum_{j\in\Naturals} \min\{\tau_j, T_j(k,\seg)\} \,.
\end{align*}
\else
\begin{align*}
T_{\ubs}(k, \seg) &\leq \sum_{j\in\Naturals} \min\{\tau_j, T_j(k,\seg)\}  \\
&\leq T(k, \seg) \max\{j: T(j,1) \leq T(k, \seg)\} \,.
\end{align*}
\fi
\end{theorem}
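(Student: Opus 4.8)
The plan is to lean on the observation made just before the theorem: since $T(k,r+1) > T(k,r)$ and $T(k,1) \le T(k+1,1)$, every child in the UBS tree (the children of $(k,r)$ being $(k,r+1)$ and, when $r=1$, also $(k+1,1)$) has $T$-cost at least that of its parent. Hence UBS, which always extracts the queue element of minimal $T$-value, performs a uniform-cost search, so its sequence of extracted $T$-values is non-decreasing. The key consequence I will use is: if a pair $(j,m)$ is executed no later than $(k,r)$, then $T(j,m) \le T(k,r)$, and therefore $T(j,m) \le T_j(k,r)$ by the definition of $T_j$.

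First I would fix the moment UBS finishes executing segment $(k,r)$. This moment is meaningful because the hypothesis $T(k,r-1) < \tau_k$ guarantees program $k$ has used fewer than $\tau_k$ steps after its first $r-1$ segments, so it has not halted, $(k,r)$ is a node of the tree, and (whenever UBS reaches that cost level before terminating) it is executed. Let $S_j$ be the number of steps program $j$ has used at that moment. Every step taken by UBS belongs to exactly one program, so $T_{\ubs}(k,r) = \sum_{j\in\Naturals} S_j$, and it suffices to show $S_j \le \min\{\tau_j, T_j(k,r)\}$ for each $j$.

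Next I would bound $S_j$ by a case split. Let $(j,m)$ be the last segment of program $j$ executed so far, with the convention $m=0$ and $T(j,0)=0$ if program $j$ has not run at all. The per-segment budgets $T(j,i)-T(j,i-1)$ telescope, so $S_j \le T(j,m)$, and by the invariant above $T(j,m) \le T(k,r)$, hence $T(j,m) \le T_j(k,r)$. If program $j$ has not halted, then also $S_j < \tau_j$, so $S_j \le \min\{\tau_j, T_j(k,r)\}$. If program $j$ has halted (necessarily during segment $(j,m)$), then $S_j = \tau_j$ exactly, and since the halt occurs within segment $m$ we have $\tau_j \le T(j,m) \le T_j(k,r)$, so $\min\{\tau_j, T_j(k,r)\} = \tau_j = S_j$. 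In both cases $S_j \le \min\{\tau_j, T_j(k,r)\}$; summing over $j$ yields the first inequality.

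For the second inequality (stated in the non-supplementary version only), note that $\min\{\tau_j, T_j(k,r)\} \le T_j(k,r) \le T(k,r)$, since $T_j(k,r)$ is a maximum of values in $[0, T(k,r)]$, and that $T_j(k,r)=0$ whenever $T(j,1) > T(k,r)$, because then no segment of program $j$ has $T$-cost at most $T(k,r)$. Thus only indices $j$ with $T(j,1)\le T(k,r)$ contribute, and since $T(\cdot,1)$ is non-decreasing these form exactly $\{1,\dots,J\}$ with $J = \max\{j : T(j,1)\le T(k,r)\}$, so the sum is at most $J\,T(k,r)$. I expect the one delicate point to be the uniform-cost-search invariant itself: one has to verify it still holds after the tree is pruned by halting programs and that ``executed no later than $(k,r)$'' really does force $T$-cost at most $T(k,r)$ even under ties in $T$. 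Everything after that is bookkeeping with the telescoping budgets and the definitions of $T_j$ and $\tau_j$.
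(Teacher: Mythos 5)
Your proof is correct and follows essentially the same route as the paper: the uniform-cost-search optimality invariant gives $T(j,m)\leq T(k,\seg)$ for every previously executed segment, hence each program $j$ has used at most $\min\{\tau_j, T_j(k,\seg)\}$ steps, and summing over programs yields the bound (with the corollary-type second inequality obtained exactly as in the paper from $T(j,0)=0$ and monotonicity of $T(\cdot,1)$). Your version merely spells out the telescoping of per-segment budgets and the halted/non-halted case split that the paper leaves implicit.
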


\ifsup
\begin{corollary}\label{cor:ubs}
For any $(k, \seg)$ with $T(k, \seg-1) < \tau_k$,
\begin{align*}
T_{\ubs}(k,\seg) \leq T(k, \seg) \max\{j: T(j,1) \leq T(k, \seg)\} \,.
\end{align*}
\end{corollary}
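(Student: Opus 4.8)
The plan is to show that UBS extracts the program/segment pairs $(k,r)$ in exactly the order of a Uniform Cost Search on the tree of \cref{fig:ubs-tree}, where the pair $(k,r)$ is a node of cost $T(k,r)$, and then to combine the standard frontier invariant of best-first tree search with a telescoping account of the per-program budgets. Two structural facts drive the argument. \emph{(Order.)} Since $T$ is strictly increasing in its second argument and nondecreasing in its first, and since segment $(j,m+1)$ is inserted only after $(j,m)$ has been executed without halting, the segments of each program $j$ are executed in the order $(j,1),(j,2),\dots$, and every ancestor of $(k,r)$ on the path to the root $(1,1)$ — namely one of $(k,r-1),\dots,(k,1),(k-1,1),\dots,(1,1)$ — has $T$-cost at most $T(k,r)$. \emph{(Frontier invariant.)} At every moment, for every pair $(k,r)$ that is ever inserted into the queue, either $(k,r)$ has already been extracted, or some ancestor of $(k,r)$ is currently in the queue. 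This is proved by induction on the iterations: the root starts in the queue, and when an ancestor $a\ne(k,r)$ of $(k,r)$ is extracted, the child of $a$ that lies on the path to $(k,r)$ is inserted — a "sibling" child $(\cdot+1,1)$ is inserted because it is guarded only by $r=1$, and a "next-segment" child $(k,m)\to(k,m+1)$ is inserted because the relevant program has not halted, which under the theorem's hypothesis $T(k,r-1)<\tau_k$ holds for every child on the path up to $(k,r)$; in particular $(k,r)$ is itself eventually extracted.

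Now fix $(k,r)$ with $T(k,r-1)<\tau_k$, and for each program $j$ let $m_j$ be the number of its segments executed up to and including the extraction of $(k,r)$. The budget allotted to segment $m$ of program $j$ is $T(j,m)-T(j,m-1)\ge 0$, and these telescope (using $T(j,0)=0$), so over $m_j$ segments program $j$ is given total budget $T(j,m_j)$ and hence uses at most $T(j,m_j)$ steps; and if it halted it used exactly $\tau_j\le T(j,m_j)$ steps. Therefore program $j$ has consumed at most $\min\{\tau_j,T(j,m_j)\}$ steps, and summing over $j$ gives $T_{\ubs}(k,r)\le\sum_{j\in\Naturals}\min\{\tau_j,T(j,m_j)\}$. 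It remains to show $T(j,m_j)\le T_j(k,r)$. If $m_j=0$ this is immediate since $T_j(k,r)\ge T(j,0)=0$. Otherwise, consider the instant segment $(j,m_j)$ was extracted: either $(j,m_j)=(k,r)$, in which case $T(j,m_j)=T(k,r)\le T_k(k,r)$; or $(k,r)$ had not yet been extracted, so by the frontier invariant some ancestor $a$ of $(k,r)$ was in the queue, and by extract-min $T(j,m_j)\le T(a)\le T(k,r)$ using the order fact. Either way $T(j,m_j)$ is one of the values $T(j,m)$ with $T(j,m)\le T(k,r)$, whose maximum is $T_j(k,r)$, giving the first displayed bound.

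For the second (non-supplementary) bound, observe that a term $\min\{\tau_j,T_j(k,r)\}$ can be nonzero only when $T(j,1)\le T(k,r)$ — otherwise every $T(j,m)$ with $m\ge1$ exceeds $T(k,r)$, so $T_j(k,r)=T(j,0)=0$. By monotonicity of $T(\cdot,1)$ the set of such $j$ is $\{1,\dots,j_{\max}\}$ with $j_{\max}=\max\{j:T(j,1)\le T(k,r)\}$, and each surviving term is at most $T_j(k,r)\le T(k,r)$, so the sum is at most $j_{\max}\,T(k,r)$, as claimed. I expect the only real obstacle to be bookkeeping: making the frontier-invariant induction airtight in the presence of programs that halt (so some nodes are never inserted) and handling ties among $T$-values so that "extracted before $(k,r)$" and "ancestor currently in the queue" mesh correctly; the budget telescoping and the counting of nonzero terms are then routine.
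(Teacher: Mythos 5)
Your proposal is correct, and the step that actually establishes the corollary---dropping every program $j$ with $T(j,1) > T(k,r)$ because then $T_j(k,r) = T(j,0) = 0$, and bounding each remaining term of the sum by $T_j(k,r) \leq T(k,r)$ with at most $\max\{j : T(j,1) \leq T(k,r)\}$ such terms by monotonicity of $T(\cdot,1)$---is exactly the paper's one-line deduction of \cref{cor:ubs} from \cref{thm:ubs}. The only difference is that rather than citing \cref{thm:ubs} you re-derive its sum bound from scratch, spelling out via the frontier-invariant induction and the budget telescoping the ``optimality property of Uniform Cost Search'' that the paper's proof of the theorem invokes as a black box; this is sound but not needed for the corollary itself.
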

\else
\fi

\begin{example}\label{example:ubs}
A good choice is $T(k, \seg) = \seg 2^k$.
Then \ifsup \cref{cor:ubs} \else \cref{thm:ubs} \fi  implies that $T_{\ubs}(k,\seg) \leq \seg 2^k \floor{k + \log_2(\seg)}$.
\end{example}

\section{\dovibbs: Limited Feedback}\label{sec:dovibex}

\dovibbs{} uses \ubs{} to dovetail multiple instances of exponential search (see \cref{alg:zzzprog}).
The algorithm can use any of the three queries, while still exploiting the additional information provided in the extended and integer feedback models.
Following \cref{example:ubs}, we use $T(k, r) = r 2^k$ so
that $T(k, r) - T(k, r-1) = 2^k$. 
Program $k$ executes one iteration of the loop of exponential search with budget $b_k = 2^k$; if the budget is not entirely used during a segment, the segment ends early.
In the limited feedback model, exponential search may continue halving the interval $[\costlow, \costhigh]$ indefinitely and never halt.
The scheduler solves this issue: by interleaving multiple instantiations of exponential search with increasing budgets, the total time can be bounded as a function 
of the time required by the first program $k$ that finds a solution.

\begin{theorem}\label{thm:zzz}
For $\query \in \{\queryext, \queryint, \querylim\}$, 
the number of expansions made by \cref{alg:zzzprog} is at most
\begin{align*}
    \quad\quad Z\floor{\log_2 Z}\ ,\quad\quad Z = 2\nstar\nbisec(\cost_{\min}, \coststar,\fgap(\coststar))
\end{align*}
and also at most $O(Z'\log Z')$ with
\begin{align}
Z' = \min_{b\geq \nstar} b\,\nbisec(\cost_{\min}, \coststar, \cost_{\crit}(b)-\coststar)\,. \label{eq:z3_minb}
\end{align}
\end{theorem}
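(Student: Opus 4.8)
The plan is to single out, inside the dovetailing schedule run by \ubs{}, the program whose per-segment budget first reaches $\nstar$; to bound the number of segments that program needs before it halts with a certified optimal solution; and then to read off the total number of expansions from the \ubs{} step bound of Example~\ref{example:ubs}. The second bound is obtained by the same argument applied to a freely chosen budget $b\geq\nstar$ rather than to $\nstar$ itself.

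\textbf{Setup and segment bound.} With $T(k,\seg)=\seg 2^k$ the budget of each segment of program $k$ is $2^k$, and one segment of program $k$ is exactly one call to \query{} inside the exponential search it runs. Let $k_*=\max\{1,\ceil{\log_2\nstar}\}$ be the first index with $2^{k_*}\geq\nstar$. Since a \query{} with cost limit at least $\coststar$ and budget at least $\nstar$ throws an optimal solution, Proposition~\ref{prop:es-game-end}, which holds for all three feedback models, shows that program $k_*$ halts with an optimal solution after at most $\nbisec(\texttt{start},\coststar,\costcrit(2^{k_*})-\coststar)$ segments, where $\texttt{start}\geq\costmin$ is the lower bound program $k_*$ starts from; as $\nbisec$ is non-increasing in its first argument this is at most $\nbisec(\costmin,\coststar,\costcrit(2^{k_*})-\coststar)$. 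Next I would note $\costcrit(2^{k_*})-\coststar\geq\fgap(\coststar)$: because $2^{k_*}\geq\nstar=n(\coststar)$ the critical value $\costcrit(2^{k_*})=\min\{v\in A:n(v)>2^{k_*}\}$ is at least $\tceil{\coststar}$, and $\coststar\in A$ gives $\tfloor{\coststar}=\coststar$, hence $\fgap(\coststar)=\tceil{\coststar}-\coststar$. Since $\nbisec$ is also non-increasing in its third argument, program $k_*$ halts within $r:=\nbisec(\costmin,\coststar,\fgap(\coststar))$ segments, so that $Z=2\nstar r$.

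\textbf{Scheduling and the first bound.} \ubs{} executes program/segment pairs in increasing $T$-order and its running step count $T_{\ubs}$ is non-decreasing along that order; moreover $(k_*,\seg_*)$ (the pair at which program $k_*$ halts, with $\seg_*\leq r$) is itself a halt-with-solution pair, so the first such pair has $T$-cost at most $T(k_*,\seg_*)$ and \dovibbs{} terminates, via the optimal-solution exception, no later than when \ubs{} executes $(k_*,\seg_*)$. Hence the total number of expansions is at most $T_{\ubs}(k_*,\seg_*)$, which by Example~\ref{example:ubs} is at most $\seg_*\,2^{k_*}\floor{k_*+\log_2\seg_*}$. Because $2^{k_*}\leq 2\nstar$ we get $\seg_*\,2^{k_*}\leq 2\nstar r=Z$, and because $k_*\leq\log_2(2\nstar)$ and $\seg_*\leq r$ we get $k_*+\log_2\seg_*\leq\log_2 Z$, so $\floor{k_*+\log_2\seg_*}\leq\floor{\log_2 Z}$; multiplying these two non-negative bounds gives $T_{\ubs}(k_*,\seg_*)\leq Z\floor{\log_2 Z}$, the first claim.

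\textbf{The second bound, and the main obstacle.} For arbitrary $b\geq\nstar$ set $k(b)=\max\{1,\ceil{\log_2 b}\}$, so $b\leq 2^{k(b)}\leq 2b$; since $\costcrit$ is non-decreasing, program $k(b)$ halts within $\nbisec(\costmin,\coststar,\costcrit(2^{k(b)})-\coststar)\leq\nbisec(\costmin,\coststar,\costcrit(b)-\coststar)=:r_b$ segments, and repeating the computation above with $k_*$ replaced by $k(b)$ bounds the total expansions by $2b\,r_b\,\floor{\log_2(2b\,r_b)}$. Taking $b$ to be the minimizer in~\eqref{eq:z3_minb} makes $b\,r_b=Z'$, giving $2Z'\floor{\log_2(2Z')}=O(Z'\log Z')$. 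I expect the main difficulty to be bookkeeping rather than any deep idea: carefully justifying that \ubs{}'s total step count is dominated by $T_{\ubs}$ evaluated precisely at the halting pair of the chosen program (and not at some later pair), and lining up the quantities $\nbisec(\costmin,\coststar,\cdot)$ produced by Proposition~\ref{prop:es-game-end} and the monotonicity of $\costcrit$ and $\nbisec$ exactly with the definitions of $Z$ and $Z'$.
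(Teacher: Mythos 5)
Your proposal is correct and follows essentially the same route as the paper's proof: identify the first program index $k_*$ (or $k(b)$ for the second bound) whose per-segment budget reaches $\nstar$, bound its number of segments via \cref{prop:es-game-end} together with $\cost_{\crit}(2^{k_*})-\coststar \geq \fgap(\coststar)$ (resp.\ monotonicity of $\cost_{\crit}$), and then bound the total step count at the terminating pair via \cref{thm:ubs}/\cref{example:ubs}, exactly as the paper does. Your write-up merely spells out a few details the paper leaves implicit (the $\floor{\log_2 Z}$ algebra and the choice $2^{k(b)}\leq 2b$), so no substantive difference.
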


The dependency on $\fgap(\coststar)$ is gentler than the dependency on $\fgapmin$ in \cref{thm:rv}. 
It means that the behaviour of \dovibbs{} only depends on the structure of the search space at the solution rather than on the worst case of all iterations.

\begin{remark}
Sometimes there exists a $b > \nstar$ for which
\begin{align*}
b\nbisec(\cost_{\min}, \coststar, \cost_{\crit}(b) - \coststar)
\ll
\nstar\nbisec(\cost_{\min}, \coststar, \fgapmin)\,.
\end{align*}
In these cases the combination of \ubs{} and exponential search can improve on \cref{alg:rv}. 
Furthermore when using extended feedback, programs $k < k_*$ will now halt if they can prove they cannot find a solution.
This allows us to provide the following complementary bound,
which is only an additive $2\nstar \seg_2\floor{\log_2 \seg_2}$ term away from \cref{thm:rv}.
\end{remark}

\begin{theorem}\label{thm:zzz2}
When $\query = \queryext$, the number of expansions made by \cref{alg:zzzprog} is at most 
$2\nstar (\seg_1 + \seg_2(1 + \floor{\log_2 \seg_2})$
with $\seg_1 = \nbisec(\cost_{\min}, \coststar, \fgapmin)$,
    $\seg_2 = \nbisec(\cost_{\min}, \coststar, \fgap(\coststar))$.
\end{theorem}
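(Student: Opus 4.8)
Let $k_*$ be the first program index (at least $1$) whose budget $b_{k_*} = 2^{k_*}$ satisfies $b_{k_*} \geq \nstar$; note that then $2^{k_*} \leq 2\nstar$.
Recall that in \cref{alg:zzzprog} one segment of program $k$ is exactly one iteration of exponential search --- one call to $\queryext(\cdot, 2^k)$ --- and that each program's exponential search is started from a cost limit at least $\costmin$.
The plan is to apply \cref{thm:ubs} to the program/segment pair $\sigma^* = (k', m')$ during whose execution the algorithm throws an optimal solution, and to bound the resulting sum $\sum_j \min\{\tau_j, T_j(\sigma^*)\}$ by treating the indices $j < k_*$ and $j \geq k_*$ separately.

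First I would bound the halting time of each program $j < k_*$.
Here $b_j = 2^j < \nstar = n(\coststar)$, so $\cost_{\crit}(b_j) \in A$ and $\cost_{\crit}(b_j) \leq \coststar$.
By \cref{prop:es-nobudget} and the monotonicity of $\nbisec$ (non-increasing in its first and third arguments, non-decreasing in its second), the exponential search of program $j$ returns $\cost_{\crit}(b_j)$ after at most $\nbisec(\costmin, \cost_{\crit}(b_j), \fgapmin) \leq \nbisec(\costmin, \coststar, \fgapmin) = \seg_1$ queries.
With extended feedback, once it has converged the program holds a certificate that its budget is insufficient to reach a goal --- every limit $C \geq \cost_{\crit}(b_j)$ exceeds the budget, and every smaller limit completes the search without reaching a goal since it lies below $\coststar$ --- and so it halts; hence $\tau_j \leq T(j, \seg_1) = \seg_1 2^j$.

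Next I would show that program $k_*$, and hence the whole algorithm, terminates within $\seg_2$ of its segments.
Since $b_{k_*} \geq \nstar = n(\coststar)$ we have $\cost_{\crit}(b_{k_*}) > \coststar$, and because both of these values lie in $A$ this gives $\cost_{\crit}(b_{k_*}) \geq \tceil{\coststar}$, so $\cost_{\crit}(b_{k_*}) - \coststar \geq \tceil{\coststar} - \tfloor{\coststar} = \fgap(\coststar)$.
By \cref{prop:es-game-end} (again with monotonicity of $\nbisec$), program $k_*$ makes a terminating query within $\nbisec(\costmin, \coststar, \cost_{\crit}(b_{k_*}) - \coststar) \leq \nbisec(\costmin, \coststar, \fgap(\coststar)) = \seg_2$ segments.
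Because \ubs{} processes pairs in non-decreasing order of $T$-cost and program $k_*$ never halts on its own, the pair $(k_*, \seg_2)$, of $T$-cost $\seg_2 2^{k_*}$, gets processed unless the interaction has already ended; either way $\sigma^*$ has $T$-cost at most $\seg_2 2^{k_*} = T(k_*, \seg_2)$, hence $T_j(\sigma^*) \leq T_j(k_*, \seg_2)$ for every $j$.

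Finally I would apply \cref{thm:ubs} to $\sigma^*$ (its hypothesis holds because $\sigma^*$ was extracted from the queue), so that the number of expansions, which equals $T_{\ubs}(\sigma^*)$, satisfies
\begin{align*}
T_{\ubs}(\sigma^*)
&\leq \sum_j \min\{\tau_j, T_j(k_*, \seg_2)\} \\
&\leq \sum_{j < k_*} \seg_1 2^j \;+\; \sum_{j \geq k_*} T_j(k_*, \seg_2)\,,
\end{align*}
where the first bound combines \cref{thm:ubs} with $T_j(\sigma^*) \leq T_j(k_*, \seg_2)$.
The first sum is at most $\seg_1 2^{k_*} \leq 2\nstar\seg_1$.
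In the second, $T_j(k_*, \seg_2) = \floor{\seg_2 2^{k_*-j}}\,2^j \leq \seg_2 2^{k_*}$, and it vanishes once $j > k_* + \log_2\seg_2$, so at most $1 + \floor{\log_2\seg_2}$ of its terms are nonzero, each at most $\seg_2 2^{k_*} \leq 2\nstar\seg_2$; thus the second sum is at most $2\nstar\seg_2(1 + \floor{\log_2\seg_2})$.
Adding the two parts yields the claimed bound $2\nstar(\seg_1 + \seg_2(1 + \floor{\log_2\seg_2}))$.
I expect the main obstacle to be the reasoning that bounds $T(\sigma^*)$: one must simultaneously pin down that the true game-ending pair has $T$-cost at most $T(k_*, \seg_2)$ and exploit the extended-feedback assumption to ensure that every program $j < k_*$ genuinely halts --- under $\querylim$ such a program could loop forever, which is exactly why this theorem is restricted to $\queryext$, and without the halting the $j < k_*$ block would carry a spurious $\Theta(\log\nstar)$ factor. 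The rest reduces to monotonicity of $\nbisec$ and an elementary geometric sum.
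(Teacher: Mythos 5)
Your proposal is correct and follows essentially the same route as the paper's proof: the same $k_*$, \cref{prop:es-nobudget} to halt programs $j<k_*$ within $\seg_1$ segments, \cref{prop:es-game-end} (with $\cost_{\crit}(b_{k_*})-\coststar \geq \fgap(\coststar)$) to terminate program $k_*$ within $\seg_2$ segments, and \cref{thm:ubs} at $(k_*,\seg_2)$ with the same geometric-sum and started-programs count. You merely make explicit a few steps the paper leaves implicit (the halting certificate under extended feedback and why the terminating pair has $T$-cost at most $T(k_*,\seg_2)$), which is fine.
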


\begin{algorithm}[htb!]
\begin{lstlisting} 
def run_prog(k, b):
  $[\costlow, \costhigh]$ = get_state(k, default = $[\cost_{\min}, \infty]$)
  if $\cost_{\high} = \infty$:  # exponential phase
    $\cost$ = $2\times\cost_{\low}$
  else:  # binary phase
    $\cost$ = $(\cost_{\low} + \cost_{\high})/2$
  $[\costlow, \costhigh] = [\costlow, \costhigh] \cap\,$query$(\cost, b)$
  if $\costlow$ >= $\costhigh$: return "halted"
  store_state(k, $[\costhigh, \costlow]$)
def DovIBEX():
  UBS($(k, r) \mapsto r 2^k$, run_prog)
\end{lstlisting}
\caption{\dovibbslong}\label{alg:zzzprog}
\end{algorithm}
 
\section{Enhancements}\label{sec:enhancements}





We now describe three enhancements for \cref{alg:rv} (see  \cref{alg:rv_enhanced}). The enhancements use a modified query function called $\queryextp$ that returns the same interval as $\queryext$ and the number of expansions, which is  $n_{\used} = \min\{b, n(\cost)\}$. For notational simplicity we write 
$[x,y], n_{\used} = \queryextp(\cost, b)$.

First, in each iteration of the enhanced IBEX a query is performed with an infinite expansion budget
and minimum cost limit $\costlow$
(\cref{alg:en:inf}). If the resulting number of expansions is at least $2b$, where $b$ is the current budget, then \ibbs{} updates $\costlow$, skips the exponential search and moves directly to the next iteration.
If furthermore the DFS algorithm is given the lower bound $\cost_{k}$
and throws a solution if its cost is $\cost_k$, then
this guarantees that \ibbs{} performs exactly like IDA* in domains in which the number of expansions grows by at least a factor 2 in each iteration.
If the queries with infinite budget (\cref{alg:en:inf}) do not help skipping iterations,
in the worst case they cost an additive $2\nstar$ expansions. 

The second enhancement is an early stopping condition for proceeding to the next iteration. \cref{alg:rv} terminates an iteration once it finds $\cost_{\crit}(b)$, which can be slow when $\fgap(\cost_{\crit}(b))$ is small.
\cref{alg:rv_enhanced} uses a budget window defined by $2b$ and  $\alpha b$, where $\alpha \geq 2$, so that whenever a query is made and the number of expansions is in the interval $[2b, \alpha b]$,
the algorithm moves on to the next iteration (line~\ref{alg:en:early_stopping}).
Hence iteration $k$ ends when a query is made within budget with a cost within $[\cost_{\crit}(2b), \cost_{\crit}(\alpha b)]$. 

The third enhancement is the option of using 
an additive variant of the exponential search algorithm,
which increases $\costlow$ by increments of $2^j$ at iteration $j$
during the exponential phase (line~\ref{alg:en:additive}).
This variant is based on the assumption that costs increase linearly
when the budget doubles, which often happens in heuristic search
if the search space grows exponentially with the depth.

These enhancements can also be applied to \dovibbs{} 
(see \ifsup{}\cref{alg:zzz_fast} in \cref{sec:zzz_fast}\else{}the supplementary material\fi).

\begin{algorithm}[htb!]
\begin{lstlisting}[escapeinside={(*}{*)}]
# $\textcolor{lstcommentcolor}{\alpha}$: factor on budget, which must be $\textcolor{lstcommentcolor}{\geq 2}$
# is_additive: True is using additive search
def IBEX_enhanced($\alpha$=8, is_additive):
  $\costlow$ = $\cost_{\min}$
  $b$ = 1
  for $k = 1,2,\ldots$:
    $[\costlow, \costhigh]$, $n_\text{used}$ = $\queryextp$($\costlow$, $\infty$) (*\label{alg:en:inf}*)
    if $n_{\text{used}} < 2b$:
      for $j=1, 2,\ldots$: (*\label{alg:en:regular_query}*)
        if $\costhigh$ == $\infty$:  # exponential phase
          if is_additive:
            $\cost = \costlow + 2^j$ (*\label{alg:en:additive}*)
          else:
            $\cost$ = $\costlow\times 2$
        else:  # binary phase
          $\cost$ = $(\costlow + \costhigh)/2$
        $[\costlow', \costhigh']$, $n_{\text{used}}$ = $\queryextp(\cost, \alpha \times b)$
        $[\costlow,\costhigh] = [\costlow,\costhigh] \cap [\costlow',\costhigh']$
        if ($\costhigh'$ == $\infty$ and $n_{\text{used}} \geq 2b$) or (*\label{alg:en:early_stopping}*)
            $\cost_{\low}$ == $\cost_{\high}$:
          break (*\label{alg:en:regular_query_end}*)
    $b$ = max(2$b$, $n_{\text{used}}$) (*\label{alg:en:max_budget}*)
\end{lstlisting}
\caption{Enhanced \ibbs}
\label{alg:rv_enhanced}
\end{algorithm}

\section{Experiments}
\newcommand{\deepening}{DFS}



\newif\iffull\fullfalse

\newcommand{\hlh}[1]{\iffull{#1}\else\textbf{#1}\fi} 

\begin{table*}[htb!]
    \centering
\resizebox{\textwidth}{!}{
    \rowcolors{2}{gray!10}{white}
    \begin{tabular}{lccrrrrrrrrrr}
\toprule
\rowcolor{white}
\multicolumn{3}{c|}{Algorithm}& \multicolumn{2}{c|}{15-Puzzle (unit)} & \multicolumn{2}{c|}{15-Puzzle (real)} & \multicolumn{2}{c|}{(12, 4)-Topspin} &\multicolumn{2}{c|}{Chain} &\multicolumn{2}{c}{Coconut} \\
\rowcolor{white}
&$\alpha$&\multicolumn{1}{c|}{add?} & 
Solved & \multicolumn{1}{c|}{Exp.} &
Solved & \multicolumn{1}{c|}{Exp.} &
Solved & \multicolumn{1}{c|}{Exp.$\times10^3$} &
Solved & \multicolumn{1}{c|}{Exp.$\times10^4$} &
Solved & Exp.$\times10^4$ \\
\midrule  
                 \ibbsts{}      &  2   &  y  &  100  &  \hlh{242.5}  &  97   &  3\,214.1     &  100  &  1\,521.9  &  100  &  302.0  &  100  &  72.9 \\
\iffull \phantom{\ibbsts{}}     &  4   &  y  &  100  &  \hlh{242.5}  &  100  &  1\,104.4     &  100  &  743.3     &  100  &  198.2  &  100  &  63.0 \\ \fi
        \phantom{\ibbsts{}}     &  8   &  y  &  100  &  \hlh{242.5}  &  100  &  \hlh{673.1}  &  100  &  597.0     &  100  &  198.2  &  100  &  84.7\\
\iffull \phantom{\ibbsts{}}     &  16  &  y  &  100  &  \hlh{242.5}  &  100  &  573.5        &  100  &  610.0     &  100  &  198.2  &  100  &  132.2\\\fi

        \phantom{\ibbsts{}}     &  2   &  n  &  100  &  \hlh{242.5}  &  97   &  3\,549.1     &  100  &  1\,600.0  &  100  &  111.8  &  100  &  \hlh{58.5}\\
\iffull \phantom{\ibbsts{}}     &  4   &  n  &  100  &  \hlh{242.5}  &  100  &  1\,568.5     &  100  &  817.5     &  100  &  26.7   &  100  &  41.7\\\fi
        \phantom{\ibbsts{}}     &  8   &  n  &  100  &  \hlh{242.5}  &  99   &  1\,320.3     &  100  &  614.6     &  100  &  26.7   &  100  &  86.8\\
\iffull \phantom{\ibbsts{}}     &  16  &  n  &  100  &  \hlh{242.5}  &  100  &  1\,548.1     &  100  &  607.5     &  100  &  26.7   &  100  &  135.1\\\fi

                  \dovibbsts{}  &  2   &  y  &  100  &  390.5        &  100  &  1\,087.7     &  100  &  1\,083.1  &  100  &  287.2  &  100  &  107.4 \\ 
\iffull \phantom{\dovibbsts{}}  &  4   &  y  &  100  &  322.0        &  100  &  928.6        &  100  &  723.9     &  100  &  175.5  &  100  &  291.9 \\\fi 
        \phantom{\dovibbsts{}}  &  8   &  y  &  100  &  322.0        &  100  &  767.4        &  100  &  606.1     &  100  &  125.9  &  100  &  1\,136.0 \\ 
\iffull \phantom{\dovibbsts{}}  &  16  &  y  &  100  &  320.1        &  100  &  695.1        &  100  &  456.3     &  100  &  120.7  &  100  &  16\,860.6 \\\fi 

        \phantom{\dovibbsts{}}  &  2   &  n  &  100  &  322.0        &  98   &  2\,355.2     &  100  &  1\,145.1  &  100  &  33.8   &  100  &  121.8 \\ 
\iffull \phantom{\dovibbsts{}}  &  4   &  n  &  100  &  322.0        &  99   &  2\,264.3     &  100  &  725.3     &  100  &  26.6   &  100  &  439.9 \\\fi 
        \phantom{\dovibbsts{}}  &  8   &  n  &  100  &  474.6        &  100  &  2\,432.2     &  100  &  590.5     &  100  &  24.9   &  100  &  2\,882.9 \\ 
\iffull \phantom{\dovibbsts{}}  &  16  &  n  &  100  &  1\,465.5     &  97   &  1\,933.0     &  100  &  637.0     &  100  &  25.8   &  100  &  13\,774.4 \\\fi 

           \multicolumn{3}{l}{EDA* $\gamma=2$}  &  99   &  5\,586.0     &  100  &  2\,882.3   &  100  &  807.5        &  100  &  \hlh{19.4}  &  3    &  $\geq 554\,249.0$ \\
 \iffull \multicolumn{3}{l}{EDA* $\gamma=1.1$}  &  100  &  1\,347.4     &  100  &  694.4      &  100  &  234.9        &  100  &  107.5       &  3    &  $\geq 564\,518.5$\\\fi
        \multicolumn{3}{l}{EDA* $\gamma=1.01$}  &  100  &  1\,023.8     &  100  &  742.0      &  100  &  730.2        &  100  &  990.2       &  10   &  $\geq 528\,137.9$\\
                                    IDA*\_CR&&  &  100  &  868.4        &  100  &  700.6      &  100  &  \hlh{346.0}  &  100  &  988.3       &  3    &  $\geq 516\,937.7$\\
                                       IDA*&&   &  100  &  \hlh{242.5}  &  57   &  62\,044.3  &  100  &  2\,727.5     &  100  &  162\,129.0  &  100  &  5\,484.2\\
\hline
$\nstartree^{\scriptstyle <}$&  &&  100 & 100.8   &  100 & 258.1  &  100  &  35.8   &  100  &  4.9  &  100  &  2.7\\
\bottomrule
    \end{tabular}
}
    \caption{Results on tree search domains.
    Each tasks has 100 instances.
    Expansions (Exp.) are averaged on solved tasks only (except Coconut), and times $10^6$ for the 15-puzzle.
    \ibbsts{} is the implementation for tree search of the \ibbs{} framework.
    add? is the \texttt{is\_additive} parameter.
    \todo{$\nstartree^{\scriptstyle <}$ is for $< \coststar$}
    }
    \label{tab:results}
\end{table*}

\begin{table}[htb!]
    \centering
\resizebox{\columnwidth}{!}{%
    \rowcolors{2}{gray!10}{white}
\begin{tabular}{lccrrrr}
        \toprule
        \rowcolor{white}
        \multicolumn{3}{c|}{Algorithm}& \multicolumn{1}{c|}{d=$100$}& \multicolumn{1}{c|}{d=$1\,000$}& \multicolumn{2}{c}{d=$10\,000$}\\
        \rowcolor{white}
        &$\alpha$& \multicolumn{1}{@{}c@{}|@{}}{add?\,} & \multicolumn{1}{r|}{Exp.} & \multicolumn{1}{r|}{Exp.} & Exp. & \multicolumn{1}{@{}r@{}}{Time\,(s)} \\
        \midrule
                  \ibbsgs{} & 2  & y  & 2\,592 & 35\,478  & 752\,392  & 0.4 \\
\iffull \phantom{\ibbsgs{}} & 4  & y  & 1\,279 & 22\,278 & 312\,500 & 0.2 \\\fi
        \phantom{\ibbsgs{}} & 8  & y  & 1\,276 & 22\,275 & 312\,497  & 0.2 \\
\iffull \phantom{\ibbsgs{}} & 16 & y  & 1\,309 & 25\,843 & 322\,662 & 0.2 \\\fi
        \phantom{\ibbsgs{}} & 2  & n  & 2\,429 & 26\,030 & 513\,573  & 0.4 \\
\iffull \phantom{\ibbsgs{}} & 4  & n  & 769 & 8\,551 & 82\,363  & 0.1 \\\fi
        \phantom{\ibbsgs{}} & 8  & n  & 513 & 8\,821 & 84\,434  & 0.1 \\
\iffull \phantom{\ibbsgs{}} & 16 & n  & 811 & 8\,160 &  35\,266 & 0.1 \\\fi
        \dovibbsgs{}           & 2  & y & 2\,195 & 31\,862 & 564\,720 & 0.1  \\
\iffull \phantom{\dovibbsgs{}} & 4  & y & 1\,621 & 26\,325 & 410\,131 & 0.1 \\\fi
        \phantom{\dovibbsgs{}} & 8  & y & 1\,495 & 15\,757 & 189\,883 & 0.1 \\
\iffull \phantom{\dovibbsgs{}} & 16 & y & 1\,046 & 11\,987 & 185\,239 & 0.1 \\\fi
        \phantom{\dovibbsgs{}} & 2  & n & 1\,547 & 12\,987 & 185\,500 & 0.1 \\
\iffull \phantom{\dovibbsgs{}} & 4  & n & 531    & 7\,665  & 103\,135 & 0.1 \\\fi
        \phantom{\dovibbsgs{}} & 8  & n & \hlh{449}    & \hlh{4\,017}  & \hlh{36\,093} & 0.1 \\
\iffull \phantom{\dovibbsgs{}} & 16 & n & 252    & 2\,789  & 33\,432 & 0.1  \\\fi
         A*/B/B'&&                            & 7\,652 & 751\,502  & 75\,015\,002  & 22.4 \\ 
\hline         
         $\nstargraph^{\scriptstyle <}$&& &   200 & 2\,000 & 20\,000 & 0.0  \\
        \bottomrule
    \end{tabular}
}
    \caption{Results for inconsistent heuristics in graph search. 
    \ibbsgs{} is the implementation of the IBEX framework for graph search.
    \todo{$\nstargraph^{\scriptstyle <}$}}
    \label{tab:inconsistent}
\end{table}

We test\footnote{
All these algorithms are implemented in C++ in the publicly available HOG2 repository, \url{https://github.com/nathansttt/hog2/}.}
\ibbs{} (\ibbsts{}, enhanced), \dovibbs{} (\dovibbsts{}, enhanced),
\ida{} \cite{korf1985ida}, \idacr{} \cite{sarkar1991reducing} and \eda{} \cite{sharon2014eda}.
\eda($\gamma$) is a variant of \ida{} designed for polynomial domains that repeatedly calls \deepening{} with unlimited budget and a cost threshold of $\gamma^k$ at iteration $k$.
In our experiments we take $\gamma \in \{2,1.01\}$.
\idacr{} behaves similarly, but adapts the next cost threshold by collecting the costs of the nodes
in the fringe into buckets and selecting the first cost that is likely to expand at least $b^k$ nodes in the next iteration.
Our implementation uses 50 buckets and sets $b=2$.
The number of nodes (states) of cost strictly below $\coststar$ is reported as $\nstartree^{\scriptstyle<}$ ($\nstargraph^{\scriptstyle<}$).

These algorithms are tested for tree search on the 15-Puzzle \cite{DM66}
with the Manhattan distance heuristic
 with unit costs and with varied edge costs of $1+1/(t+1)$ to move tile $t$,
on (12, 4)-TopSpin \cite{Lam89} with random action costs between 40 and 60 and the max of 3 4-tile pattern database heuristics, on long chains (branching factor of 1 and unit edge costs, solution depth in $[1..10\,000]$), and on a novel domain, which we explain next. 

In order to evaluate the robustness of the search algorithms, we introduce the Coconut problem, which is a domain with varied branching factor and small solution density.
The heuristic is 0 everywhere, except at the root where $h=1$.
At each node there are 3 `actions', $\{1, 2, 3\}$.
The solution path follows the same action (sampled uniformly in $[1..3]$) for $D$ steps, then it follows a random path sampled uniformly of length $q$, where
$D$ is sampled uniformly in $[1..10\,000]$ and
$q$ is sampled from a geometric distribution of parameter $1/4$.
The first action costs 1.
At depth less than $D$, taking the same action as at the parent node costs 1,
taking another action costs $2D$.
At depth larger than $D$, each action costs $1/10$.

\subsection{Results}
We use 100 instances for each problem domain,
a time limit of 4 hours for 15-Puzzle and TopSpin, 1 hour for the Coconut problem and no limit for the Chain problem. 
The results are shown in \cref{tab:results}.
We report results also for $\alpha=2$ to show the gain in efficiency when using a budget factor window of $[2, 8]$ instead of the narrower window $[2,2]$ (see \cref{sec:enhancements}).
\ibbs{} (\ibbsts) and \dovibbs{} (\dovibbsts) ($\alpha=8$) are the only robust algorithms across all domains while being competitive on all domains, whereas all other algorithms tested fail hard on at least one domain.
\ibbs{} (\ibbsts) has exactly the same behaviour as \ida{} when the number of expansions grows at least by a factor 2 at each call to $\query$ with infinite budget; see 15-Puzzle (unit).
Taking is\_additive=y helps on exponential domains, whereas is\_additive=n helps on polynomial domains, as expected.

To explain the behaviour of \idacr\ and \eda\ on the Coconut problem, consider a randomly chosen instance where $D=2\,690$ and $q=6$. The cost set by \eda{} ($\gamma=2$) in the last iteration is 4\,096, resulting in a search tree with approximately $3^{(4\,096-2\,690)/0.1}\approx 10^{6\,700}$ nodes. 
The same issue arises for \idacr.
\eda(1.01) performs only marginally better.
This is not a carefully selected example, and such behaviour occurs on almost all Coconut instances.

Finally, we evaluate our algorithms in graph search problems with inconsistent heuristics,
parameterizing \citeauthor{mero1984pathmax}'s (\citeyear{mero1984pathmax}) graph
by $d$ to have $2d+2$ states (see \ifsup{}\cref{fig:mero} in \cref{sec:mero}\else{}supplementary material\fi).
All states have heuristic of 0 except each state $t_i$ which has heuristic $d+i-1$.
A*, B \cite{Mar77}, and B' \cite{mero1984pathmax} are all expected to perform $O(\nstargraph^2)$ expansions on this graph.
The results are in \cref{tab:inconsistent}.
While A* shows quadratic growth on the number of expansions, our algorithms exhibit near-linear performance as expected.

\section{Conclusion}

We have developed a new framework called \ibbs{} that combines exponential search with an increasing node expansion budget to resolve two long-standing problems in heuristic search. The resulting algorithms for tree and graph search improve existing guarantees on the number of expansions from $\Omega(\nstar^2)$ to $O(\nstar \log \coststar)$. Our algorithms are fast and practical. They significantly outperform existing baselines in known failure cases while being at least as good, if not better, on traditional domains; hence, for tree search we recommend using our algorithms instead of IDA*.
On graph search problems our algorithms outperform A*, B and B' when the heuristic
is inconsistent, and pay only a small $\log \coststar$ factor otherwise.

We also expect the IBEX framework
to be able to tackle re-expansions problems of other algorithms,
such as Best-First Levin Tree Search \cite{orseau2018single} and Weighted A* \cite{chen19socs}.

\comment{LL: I suggest removing the following paragraph. This is work yet to be done. LO: That's called an outreach ;)}
The IBEX framework and algorithms have potential applications beyond search
in domains that exhibit a dependency between a parameter and the amount of work (computation steps, energy, etc.) required to either succeed or fail.
Some of these applications may not be well suited to the extended feedback model, which further justifies the interest in the analysis of the limited feedback model.

\section*{Acknowledgements}
This research was enabled in part by a sabbatical grant from the University of Denver and by Compute Canada (www.computecanada.ca).
Many thanks to 
Csaba Szepesv\'ari,
Andr\'as Gy\"orgy,
J\'anos Kram\'ar,
Roshan Shariff, Ariel Felner, 
and the 
reviewers for their feedback.


\appendix

\bibliographystyle{named}
\bibliography{ijcai19}

\clearpage

\ifsup

\section{Graph Search Query}\label{sec:graph_search}
The function \code{graph_search} is given in \cref{alg:graph_search}.
The actual query function is like \cref{alg:tree_query}
where the call to \code{DFS} is replaced with a call to \code{graph_search}.
We use nodes instead of paths to stress that each element of the priority queue
uses constant memory size.
The fringe is the set of generated nodes with cost larger than $C$.

\begin{algorithm}[htb!]
\begin{lstlisting}
def $\queryext$($\cost$, budget):
  data.min_fringe = $\infty$ # will be $\lstmathcomment{> C}$
  data.max_visited = 0 # will be $\lstmathcomment{\leq C}$
  data.expanded = 0 # number of expansions
  try:
    graph_search($\cost$, budget, data)
  catch "budget exceeded":
    return $[0, \texttt{data.max\_visited}]$
  return $[\texttt{data.min\_fringe}, \infty]$

# make_node: parent, state, g_cost -> node
def graph_search($C$, budget, data):
  q = make_priority_queue($g$) # $\lstmathcomment{g}$-cost ordering
  insert(q, make_node($\emptyset$, $\sinit$, 0)) #$\lstmathcomment{g(\sinit)=0}$
  visited = {$\sinit$}
  while not empty(q):
    node = extract_min(q)
    s = node.state
    if $s$ in visited # already visited
      continue      # at lower $\lstmathcomment{g}$-cost
    visited += {$s$}
    data.max_visited =
      max(data.max_visited, node.g + $h(s)$)
    if is_goal(s): # optimal solution found to
      throw node # be dealt with by the user
    if data.expanded >= budget:
      throw "budget exceeded"
    data.expanded++
    for $s' \in \succf(s)$:
      node$'$ = make_node(node,$s'$,node.g+$c(s, s')$)
      h$'$ = node$'$.g + $h(s')$
      if h$' \leq C$:
        insert(q, node$'$)
      else:
        data.min_fringe = 
          min(data.min_fringe,h$'$)
  return "no solution"
\end{lstlisting}
\caption{Graph search (simple version)}
\label{alg:graph_search}
\end{algorithm}

\paragraph{Enhancement.}
When the heuristic is sufficiently consistent, A* has an optimal behaviour.
In practice, it seldom happens that an inconsistent heuristic leads to a bad behaviour of A*.
Therefore, to avoid the (small) overhead of \ibbsgs{} for such cases, we propose the following rule of thumb:
Run A* for at least 1000 node expansions. 
Thereafter, if the number of state re-expansions ever becomes at least half of the total number of expansions,
switch to \ibbsgs{}.

\section{Depth-First Search: Further Enhancements}

There are a number of enhancements to DFS that strictly reduce the number of expansions.
\begin{itemize}[leftmargin=*]
    \item \textit{Upper bounds from sub-optimal solutions.}\,\, If a solution is found but the budget is exceeded, then keep the solution cost as an upper bound for subsequent calls.
    \item Detection of duplicate states can be performed along the current trajectory to avoid loops in the underlying graph, while keeping a memory that grows only linearly with the depth of the search, but is now a multiple of the state size.
\end{itemize}
\todo{Mention name 'early stopping' in main text?}

\section{Enhanced DovIBEX Algorithm}\label{sec:zzz_fast}
An enhanced version of \cref{alg:zzzprog} is provided in \cref{alg:zzz_fast}.
The call $\queryp(C, b)$ is like $\query(C, b)$ but additionally returns the number of node expansions.
The optimized version of the algorithm removes programs $k$ from the scheduler if it can be proven that they cannot find a solution.
This pruning happens on line \cref{zzz_fast:prune}, which condition can be fulfilled 
in several circumstances.

Let $k$ be the current program index, then
if a program $k' > k$ has already made a call to query for which the budget was sufficient and the number of nodes expanded was at least $B(k)$, 
then program $k$ can never find a solution and is thus removed from the scheduler.
Observe that if for program $k$ we have $\costhigh \leq \costlow$,
then we necessarily have $b < \budget_{\low}$, since
$b < n(\costhigh) \leq \costlow = \budget_{\low}$.

On \cref{zzz_fast:infty} we know that $\costlow$ is a lower bound
on the cost of the solution, so it is safe to use infinite budget.
This may lead to further pruning on \cref{zzz_fast:prune}.

As for \ibbs{}, \cref{alg:zzz_fast} also has two parameters.
The first one controls the budget, the second one whether we use
an additive exponential search, or a multiplicative one.

With $B(k) = \alpha^k$ and $T(k, r)=r2^k$, it can be worked out that
$T_{\ubs}(k, r) \leq \frac{2}{\alpha-2}\alpha^{k}r^{\log_2 \alpha}$,
where $k = \ceil{\log_\alpha \nstar}$ and $r=r_2$ as in \cref{thm:zzz},
leading
to a number of expansions bounded by $\frac{2\alpha}{\alpha-2}\nstar r_2^{\log_2\alpha}$.
This is more efficient than the default setting when $r$ is small, but becomes
rapidly less efficient for larger $r$.
But note that this can be mitigated by \cref{eq:z3_minb} and possibly by \cref{thm:zzz2}.

\begin{algorithm}[htb!]
\begin{lstlisting}
# T: cost function for UBS
def T(k, r, C): return r$2^\text{k}$

# $\lstmathcomment{\alpha}$: Budget parameter
# is_additive: use additive exp-search?
def DovIBEX_enhanced($\alpha=8$, is_additive):
  # $\textcolor{lstcommentcolor}{\budget_{\low}}$ and $\textcolor{lstcommentcolor}{\costlow}$ are globals, but not $\textcolor{lstcommentcolor}{\costhigh}$
  $\costlow$ = $\cost_{\min}$ # lower bound on $\textcolor{lstcommentcolor}{\coststar}$
  $\budget_{\low}$ = 0 # lower bound on $\textcolor{lstcommentcolor}{\nstar}$
  q = make_priority_queue(T)
  q.insert((1, 1, $\infty$))  # (k=1, r=1, $\textcolor{lstcommentcolor}{\costhigh=\infty}$)
  
  while not q.empty():
    (k, r, $\costhigh$) = q.extract_min()
    b = $\alpha^k$  # budget
    
    if r == 1:
      q.insert((k+1, r, $\infty$))
    
    if b $\leq \budget_{\low}$ or $\costhigh \leq \costlow$: (*\label{zzz_fast:prune}*)
      # Can't find a solution with this budget
      continue  # remove $\textcolor{lstcommentcolor}{k}$ from the queue

    if r == 1:  # IDA* trick, can be omitted (*\label{zzz_fast:ida}*)
      $\cost$ = $\costlow$
      b = $\infty$ (*\label{zzz_fast:infty}*)
    else if $\costhigh$ == $\infty$: # exponential phase
      if is_additive:
        $\cost = \costlow + 2^{r-1}$
      else:
        $\cost$ = $\costlow\times 2$
    else: # binary phase
      $\cost$ = $(\costlow + \costhigh)/2$
    
    $\costlow'$, $\costhigh'$, $\budget_{\low}'$ = $\queryp$($\cost$, b)
    $[\costlow, \costhigh]$ = $[\costlow, \costhigh]\cap[\costlow', \costhigh']$
    
    if $\costhigh'$ == $\infty$:  # budget not exceeded 
      # Solution requires more than $\textcolor{lstcommentcolor}{\budget_{low}'}$ 
      $\budget_{\low}$ = $\budget_{\low}'$
    
    q.insert((k, r+1, $\costhigh$)) 
\end{lstlisting}
\caption{The \ibbslong{} algorithm with a few enhancements}\label{alg:zzz_fast}
\end{algorithm}

Further enhancements can be considered:
\begin{itemize}
    \item If a program $k$ has found an upper bound on the cost,
    then this bound can be propagated to all $k' < k$.
    \item If a solution has been found but the budget is exceeded, keep the cost of the solution as a global upper bound
    (global branch and bound).
\end{itemize}

\section{Worst-Case Re-Expansions for B and B'}\label{sec:mero}

Figure \ref{fig:mero} is a parameterized adaptation of an example from \cite{mero1984pathmax} published by \cite{sturtevant2008using} where A*, B and B' all perform $O(d^2)$ expansions. Note that \cite{sturtevant2008using} also introduce the Delay algorithm, however this algorithm has a hidden assumption which is not necessarily true---namely that any state not counted as part of $\nstargraph$ does not have a shorter path to a state counted as part of $\nstargraph$. As this is not true in general, the Delay algorithm is not completely general.

\begin{figure}
    \centering
\begin{tikzpicture}[font=\scriptsize,xscale=0.9,thick]
\tikzstyle{n} = [draw,inner sep=1pt,minimum width=0.5cm,minimum height=1em]
\tikzstyle{a} = [-latex]
\tikzstyle{l} = [fill=white,inner sep=2pt]
\tikzstyle{l2} = [pos=0.3,fill=white,inner sep=2pt]

\node[n] (s) at (0,0) {$s$};
\node[n] (t1) at (-3,-1.5) {$t_1$};
\node[n] (t2) at (-1.5,-1.5) {$t_2$};
\node[n] (t3) at (0,-1.5) {$t_3$};
\node[n] (t4) at (1.5,-1.5) {$t_4$};
\node[] (t5) at (3,-1.5) {$\cdots$};
\node[n] (t6) at (4.5,-1.5) {$t_d$};
\node[n] (m) at (0,-3) {$m$};
\node[n] (b1) at (-3,-4.5) {$b_1$};
\node[n] (b2) at (-1.5,-4.5) {$b_2$};
\node[n] (b3) at (0,-4.5) {$b_3$};
\node[] (b4) at (1.5,-4.5) {$\cdots$};
\node[n] (b5) at (3,-4.5) {$b_{d-1}$};
\node[n] (b6) at (5.5,-4.5) {$g$};
\draw[a] (s) edge node[l] {1} (t1);
\draw[a] (s) edge node[l] {1} (t2);
\draw[a] (s) edge node[l] {1} (t3);
\draw[a] (s) edge node[l] {1} (t4);
\draw[a] (s) edge node[l] {1} (t6);
\draw[a] (t1) edge node[l2] {$d$} (m);
\draw[a] (t2) edge node[l2] {$d-1$} (m);
\draw[a] (t3) edge node[l2] {$d-2$} (m);
\draw[a] (t4) edge node[l2] {$d-3$} (m);
\draw[a] (t6) edge node[l2] {$1$} (m);
\draw[a] (m) -- (b1);
\draw[a] (b1) edge node[l,pos=0.4] {1} (b2);
\draw[a] (b2) edge node[l,pos=0.4] {1} (b3);
\draw[a] (b3) edge node[l,pos=0.4] {1} (b4);
\draw[a] (b4) edge node[l,pos=0.4] {1} (b5);
\draw[a] (b5) edge node[l,pos=0.5] {$d-1$} (b6);
\draw [ultra thick,decorate,decoration={brace,amplitude=3pt}] (5,-1.2) -- (5,-1.8);
\draw [ultra thick,decorate,decoration={brace,amplitude=4pt}] (5.5,-5) -- (-3,-5);
\node[anchor=west] at (5.1,-1.5) {$d$ states};
\node[anchor=north] at (1.25,-5.1) {$d$ states};
\end{tikzpicture}
    \caption{Worst-case example adapted from \protect\cite{mero1984pathmax} by \protect\cite{sturtevant2008using}.}
    \label{fig:mero}
\end{figure}

\section{Intractability of Admissible Heuristics}\label{sec:intract}

Let $c^*(s', s)$ be the minimum cost of any path from $s'$ to $s$
with $c^*(s, s) = 0$.
Assuming that the heuristic $h$ is admissible, define 
\begin{align*}
    \hat{h}(s) = \max_{s'\in\cS} \{h(s') - c^*(s', s)\}\,,
\end{align*}
which is the best heuristic value that can be propagated forward to $s$.
Note that $\hat{h}$ is consistent.

Let $\coststar$ be minimum cost of any solution state.
Let $n_{\text{opt}} = |\{s: g^*(s) + \hat{h}(s) \leq \coststar\}|$,
which is an upper bound on the number of states that A* expands when using $\hat{h}$ instead of $h$.
The following theorem shows that no algorithm having only access to $h$ can hope to expand $O(n_{\text{opt}})$ without oracle access to the consistent heuristic $\hat{h}$. 

\begin{theorem}\label{thm:intract}
For each deterministic search algorithm there exists a graph search problem with an admissible heuristic such that
the algorithm expands $\Omega(2^{n_{\text{opt}}})$ nodes.
\end{theorem}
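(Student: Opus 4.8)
The plan is an adversary argument tailored to the given algorithm, which we take to be deterministic, sound, and to return an optimal solution (otherwise counting its expansions is not meaningful). For each $k$ I would build a graph-search instance with an admissible heuristic on which the algorithm $\mathcal{A}$ expands at least $2^k$ distinct states while $n_{\text{opt}} \le k+2$; this gives $\Omega(2^{n_{\text{opt}}})$ expansions. The skeleton of the instance is a complete binary ``decoy'' tree of depth $k$ rooted at $\sinit$ whose edges all cost $1$, together with a single goal state $g$ to which each of the $2^k$ leaves has one edge; the heuristic is $0$ on every state of this skeleton, so it is admissible and tells $\mathcal{A}$ nothing about where $g$ is.

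First I would identify which leaf should carry the cheap solution. Run $\mathcal{A}$ on the ``heavy'' instance in which every leaf-to-goal edge costs $1$, so $\coststar = k+1$. Because $\mathcal{A}$ learns the cost of a leaf's edge to $g$ only by expanding that leaf, and every state it can observe has heuristic $0$, it cannot certify optimality of $k+1$ while any leaf is still unexpanded: lowering the cost of that leaf's edge to $0$ produces an instance on which $\mathcal{A}$'s run is identical but $\coststar = k$, contradicting optimality. Hence $\mathcal{A}$ expands all $2^k$ leaves; let $\ell^\star$ be the last of them. The actual instance is obtained by setting $c(\ell^\star,g) = 0$, so $\coststar = k$; since this cost is revealed to $\mathcal{A}$ only when it expands $\ell^\star$, its run agrees with the heavy run up to that moment, so it still expands all $2^k$ leaves.

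Next I would add structure that keeps $n_{\text{opt}}$ small without perturbing $\mathcal{A}$. For every tree node $u$ of depth $d$ that is \emph{not} on the unique $\sinit$--$\ell^\star$ path, introduce a fresh state $w_u$ that is unreachable from $\sinit$, with a single outgoing edge $w_u \to u$ of cost $0$, and set $h(w_u) = h^*(w_u)$. The cheapest route from $w_u$ to $g$ leaves the tree through an off-path leaf whose only edge to $g$ costs $1$, so $h(w_u) = k-d+1 > 0$; thus $h$ is still admissible but is now inconsistent on $w_u \to u$, while $\hat{h}$ remains consistent as noted in the setup. Since $\mathcal{A}$ never discovers any $w_u$, its $2^k$ leaf expansions are unchanged. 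However, every off-path node $u$ now satisfies $\hat{h}(u) \ge h(w_u) - c^*(w_u,u) = k-d+1$, hence $g^*(u) + \hat{h}(u) \ge d + (k-d+1) = k+1 > \coststar$; the same bound excludes every off-path leaf, and each $w_u$ has $g^*(w_u) = \infty$. Consequently $\{s : g^*(s) + \hat{h}(s) \le \coststar\}$ consists only of the $k+1$ states on the $\sinit$--$\ell^\star$ path together with $g$, so $n_{\text{opt}} \le k+2$ and $\mathcal{A}$ makes at least $2^k \ge \tfrac14 \cdot 2^{n_{\text{opt}}} = \Omega(2^{n_{\text{opt}}})$ expansions.

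The main obstacle is the tension between two requirements: $h$ must be useless to $\mathcal{A}$ so that it is forced into the full exponential exploration, yet $\hat{h}$ must be sharp so that $n_{\text{opt}}$ stays logarithmic. The construction resolves this by placing the informative heuristic values on the unreachable states $w_u$, which $\mathcal{A}$ never sees but which still enter the forward maximisation defining $\hat{h}$. Two points then need care: (i) admissibility survives only because these large values sit at states that are genuinely far from the goal, an effect obtained from the positive leaf-to-goal costs together with the cost-$0$ shortcuts $w_u \to u$; and (ii) the simulation in the first step must be non-circular, which it is because the adversary commits to its ``heavy'' responses before running $\mathcal{A}$, making $\ell^\star$ well defined, and because it is precisely soundness and optimality of $\mathcal{A}$ that force it to expand every leaf.
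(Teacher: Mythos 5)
Your construction is correct in the natural black-box model and reaches the same conclusion, but it is a genuinely different route from the paper's. The paper keeps the heuristic function \emph{identical across the whole adversarial family} (zero everywhere except one reachable hub $s_1$ with a huge value, connected to the binary tree through exponentially long chains); the instance-specific secret lives only in edge costs (infinite-cost cuts placed on chain edges the algorithm provably never examined) and in the goal location, both chosen by a pigeonhole argument over the first $2^d-1$ expansions. This buys two things: the argument does not need the assumption that the algorithm returns an \emph{optimal} solution (it only has to find the goal), and it tolerates algorithms that query $h$ or the successor function on states they have not generated, since $h$ carries no instance-specific information. Your construction instead makes $h\equiv 0$ on everything the algorithm can reach and pushes all the $\hat h$-inflation onto unreachable ghost states $w_u$ with $h(w_u)=h^*(w_u)$; this is cleaner and avoids the chains entirely, but it makes the lower bound hinge on two extra assumptions: that the algorithm is required to return certified optimal solutions (your exchange argument needs this), and that unreachable states are completely invisible. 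The latter is not innocuous: the mere presence pattern of the $w_u$'s (equivalently their $h$-values) encodes the $\sinit$--$\ell^\star$ path, so in any model where the state space or heuristic table can be inspected up front, an algorithm could read off the optimal path and, using admissibility of $h(w_u)$ to rule out cheap edges at off-path leaves, solve your final instance with $O(k)$ expansions. The paper's remark after the proof is precisely about being robust to such stronger access models.

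One local gap in your optimality argument: if the algorithm terminates on the heavy instance with some leaf $\ell$ unexpanded and its returned path happens to go \emph{through} $\ell$, then lowering $c(\ell,g)$ to $0$ gives no contradiction --- the returned path is still optimal in the modified instance. The fix is easy: for that leaf raise $c(\ell,g)$ instead (the returned path then costs more than $\coststar$), or observe that at most one leaf (the one on the returned path) can remain unexpanded, which still forces $2^k-1$ leaf expansions and leaves the rest of your argument, including the choice of $\ell^\star$ and the bound $n_{\text{opt}}\le k+2$, intact.
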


\begin{figure}
    \centering
    \begin{tikzpicture}[thick,scale=0.5,font=\tiny]
\tikzstyle{n} = [draw,circle,inner sep=1pt,minimum width=0.3cm]
\tikzstyle{a} = [-latex]
\tikzstyle{b} = [latex-]
\tikzstyle{l} = []
\tikzstyle{s} = [shorten >=25pt,shorten <=25pt,dotted]
\tikzstyle{sw} = [white,ultra thick,shorten >=20pt,shorten <=20pt]
\tikzstyle{br} = [bend right,out=-15]
\tikzstyle{bl} = [fill=white,inner sep=3pt]
\tikzstyle{g} = [fill=red]
\tikzstyle{r} = [red]
\tikzstyle{gr} = [fill=black!20!white]

\node[n,initial,initial where=above,initial text=] (root) at (4.75,2) {$\sinit$};

\node[n] (t11) at (0,0) {$s_2$};
\node[n,g] (t21) at (-2,-2) {};
\node[n,gr] (t22) at (2,-2) {};
\node[n,g] (t31) at (-3,-4) {};
\node[n,gr] (t32) at (-1,-4) {};
\node[n,g] (t33) at (1,-4) {};
\node[n,gr] (t34) at (3,-4) {};
\node[n,gr] (t41) at (-3.5,-6) {};
\node[n,g] (t42) at (-2.5,-6) {$s\smash{^*}$}; 
\node[n,gr] (t43) at (-1.5,-6) {};
\node[n,g] (t44) at (-0.5,-6) {};
\node[n,gr] (t45) at (0.5,-6) {};
\node[n,g] (t46) at (1.5,-6) {};
\node[n,gr] (t47) at (2.5,-6) {};
\node[n,g] (t48) at (3.5,-6) {};

\node[anchor=north] at (-2.5,-6.3) {goal};

\node[n,g] (m1) at (4,-2) {};
\node[n,gr] (m2) at (5,-3) {};
\node[n,g] (m3) at (6,-4) {};
\node[n,gr] (m4) at (7,-5) {};
\node[n,gr] (m5) at (8,-6) {};
\node[n,g] (m6) at (9,-7) {};

\node[n,gr] (n1) at (6,0) {$l_1$};
\node[n,gr] (n2) at (7,-1) {$r_1$};
\node[n,gr] (n3) at (8,-2) {$l_2$};
\node[n,gr] (n4) at (9,-3) {$r_2$};
\node[n,gr] (n5) at (10,-4) {$l_3$};
\node[n,gr] (n6) at (11,-5) {$r_3$};

\node[n] (s) at (9.5,-1.5) {$s_1$};
\node[anchor=west,text width=1.5cm] (sl) at (9.8,-1.5) {$h(s_1)\!=\!2^d\!+\!d\!+\!3$};

\draw[a] (root) edge[out=0,in=90] node {} (s);
\draw[a] (root) edge[out=-180,in=90] node {} (t11);

\draw[a] (s) edge node {} (n1);
\draw[a] (s) edge node {} (n2);
\draw[a] (s) edge node {} (n3);
\draw[a] (s) edge node {} (n4);
\draw[a] (s) edge node {} (n5);
\draw[a] (s) edge node {} (n6);

\draw[a] (n1) edge (m1);
\draw[a] (n2) edge (m2);
\draw[a] (n3) edge (m3);
\draw[a] (n4) edge (m4);
\draw[a] (n5) edge (m5);
\draw[a] (n6) edge (m6);

\draw[sw] (n1) edge node[scale=2] {\color{red} \rotatebox{45}{$\bm{\times}$}} (m1);
\draw[sw] (n2) edge (m2);
\draw[sw] (n3) edge node[scale=2] {\color{red} \rotatebox{45}{$\bm{\times}$}} (m3);
\draw[sw] (n4) edge (m4);
\draw[sw] (n5) edge (m5);
\draw[sw] (n6) edge node[scale=2] {\color{red} \rotatebox{45}{$\bm{\times}$}} (m6);

\draw [ultra thick,decorate,decoration={brace,amplitude=4pt}] (11.4,-5.4) -- (9.4,-7.4);
\node[anchor=west] at (10.4,-6.8) {$2^d$ nodes};

\draw[s] (n1) edge (m1);
\draw[s] (n2) edge (m2);
\draw[s] (n3) edge (m3);
\draw[s] (n4) edge (m4);
\draw[s] (n5) edge (m5);
\draw[s] (n6) edge (m6);

\draw[b,r] (t21) edge[br] (m1);
\draw[b] (t22) edge[br] (m2);

\draw[b,r] (t31) edge[br] (m3);
\draw[b] (t32) edge[br] (m4);
\draw[b,r] (t33) edge[br] (m3);
\draw[b] (t34) edge[br] (m4);

\draw[b] (t41) edge[br] (m5);
\draw[b,r] (t42) edge[br] (m6);
\draw[b] (t43) edge[br] (m5);
\draw[b,r] (t44) edge[br] (m6);
\draw[b] (t45) edge[br] (m5);
\draw[b,r] (t46) edge[br] (m6);
\draw[b] (t47) edge[br] (m5);
\draw[b,r] (t48) edge[br] (m6);

\draw[a] (t11) edge node[l] {} (t21);
\draw[a] (t11) edge node[l] {} (t22);
\draw[a] (t21) edge node[l] {} (t31);
\draw[a] (t21) edge node[l] {} (t32);
\draw[a] (t22) edge node[l] {} (t33);
\draw[a] (t22) edge node[l] {} (t34);
\draw[a] (t31) edge node[l] {} (t41);
\draw[a] (t31) edge node[l] {} (t42);
\draw[a] (t32) edge node[l] {} (t43);
\draw[a] (t32) edge node[l] {} (t44);
\draw[a] (t33) edge node[l] {} (t45);
\draw[a] (t33) edge node[l] {} (t46);
\draw[a] (t34) edge node[l] {} (t47);
\draw[a] (t34) edge node[l] {} (t48);
\end{tikzpicture} 
    \caption{The graph for the proof of \cref{thm:intract} with $d=3$.
    The large heuristic value of $s_1$ propagates through all the gray nodes.
    Red crosses in the chains indicate that the chain contains an edge with infinite cost, preventing the large heuristic from being propagated to the full binary tree.
    Only one path in the tree is entirely red, and leads to the goal g.}
    \label{fig:intract}
\end{figure}

\begin{proof}
Let $d$ be an arbitrarily large integer.
Consider the following class of problems, illustrated in \cref{fig:intract}.
The initial state is $\sinit$, which has $g(\sinit) = 0$ and successors
$\succf(\sinit) = \{s_1, s_2\}$.
Below $s_2$ is a full binary tree of depth $d$.
State $s_1$ has successors $\succf(s_1) = \{l_1,r_1,\ldots,l_d,r_d\}$, each of which starts a chain of length $2^d$ with the final state in the chains started by $l_m$ (respectively $r_m$) connected to all left-hand (respectively right-hand) children in the binary tree at depth $m$.
The heuristic values are 0 except $h(s_1) = 2^d + d + 3$.
Consider the first $2^d-1$ expansions of the search algorithm where all edge costs are unitary
and there is no goal state. By the pigeonhole principle there exists a state $s$ in the leaves of the binary tree that has not been expanded.
We now modify the graph so that
(a) the behavior of the algorithm is identical
(b) the goal is in state $s^*$ and (c) $n_{\text{opt}} = d + 2$.
To do this, let $s^*$ be the goal state and $\smash{(a_t)_{t=1}^{d+2}}$ be the unique path ending in $s^*$ and passing through $s_2$.
Then for each $3 \leq m \leq d+2$ find an edge in the chain connected to state $a_m$ that was not examined by the algorithm in the goal-less graph and set its cost to infinity, which cuts the path from state $s_1$ to state $s_m$.
Since the algorithm has not examined this edge, it cannot prove whether the heuristic value of $s_1$ should be propagated to the corresponding node in the full binary tree.
The large heuristic in state $s_1$ ensures that children in the binary tree that do not lead to the goal state have a heuristic of at least $d+3$,
whereas all states along the path $a$ have heuristic 0.
Hence $n_{\text{opt}} = d+2$.
Finally, by construction the algorithm expands at least $2^d-1$ nodes in the modified graph before finding the goal. The result follows since $d$ may be chosen arbitrarily large.
\end{proof}

\begin{remark}
The example is complicated by the fact that we did not assume the algorithm was restricted to only call the successor function on previously expanded nodes. We only used that an edge-cost is observed when its parent is expanded. We did not even assume that the algorithm is guaranteed to return an optimal solution.
The proof is easily modified to lower bound the expected number of expansions by any randomized algorithm using Yao's minimax principle and by randomizing the position of the goal in the leaves of the binary tree and the infinite-cost edges in the chains.
\end{remark}

\begin{remark}
\cref{thm:intract} also holds when using 
pathmax \cite{mero1984pathmax} or BPMX \cite{felner2005dual},
since in order to make the heuristic consistent the algorithm still needs to expand
exponentially many nodes along the chains.
\end{remark}

\section{Proofs of \texorpdfstring{\cref{prop:es-game-end,prop:es-nobudget}}{}}



In the following we use
$\eps = \texttt{start}$, $h = \texttt{high}$ and $l = \texttt{low}$.

\begin{proof}[Proof of \texorpdfstring{\cref{prop:es-game-end}}{}]
We know that $\coststar < \costcrit(b)$ by definition of $b$ through $k$,
and that $\eps \in [\coststar, \costcrit(b))$.

The number of queries in the exponential phase before $h \geq \coststar$ is at most 
\begin{align*}
    \min \{k\in\Naturals: \eps 2^k \geq \coststar\} = \ceil{\log_2  \frac{\coststar}{\eps}}_{\geq 1}\,.
\end{align*}
At the end of the exponential phase,
if $h \in [\coststar, \costcrit(b))$ then the game ends.
Otherwise $h \geq \costcrit(b)$,
and also $l \leq \coststar$ (otherwise a game-ending query would have been made;
$l = \coststar$ is possible iff $\eps = \coststar$).
Thus $l \leq \coststar < \costcrit(b) \leq h$,
and since $h = 2l$ we have $h-l \leq \coststar$.
During the binary phase, the size of the interval $[l, h]$ at least halves after each query.
Now, suppose that at some point
$h - l < 2(\cost_{\crit}(b) - \coststar)$.
Using $C= (h+l)/2$
then we obtain $C < l + \cost_{\crit}(b) - \coststar \leq \cost_{\crit}(b)$
since $l \leq \coststar$,
and also $C > h - (\cost_{\crit}(b)-\coststar) \geq \coststar$ 
since $h \geq \cost_{\crit}(b)$.
Therefore $C\in(\coststar, \costcrit(b))$
which is a game-ending query.
Starting from $h-l \leq \coststar$, this requires at most
\begin{align*}
    1+ \min&\{k\in\Nonnegints: \coststar/2^k \leq 2(\costcrit(b) - \coststar)\} \\
    &= 1+ \floor{1+\log_2 \frac{\coststar}{2(\costcrit(b)-\coststar)}}_{\geq 0} \\
    &= 1 + \floor{\log_2 \frac{\coststar}{\costcrit(b)-\coststar}}_{\geq 0}
\end{align*}
calls to query before ending the game.
Therefore the number of calls to query
is at most $\nbisec(\texttt{start}, \coststar, \costcrit(b) - \coststar)$.
\end{proof}

\begin{proof}[Proof of \cref{prop:es-nobudget}]
Let $x < y < z$ be three consecutive values in $A$ 
(that is, $y = \tceil{x}$ and $z = \tceil{y}$)
with $y = \costcrit(b)$.
During the exponential phase, the number of queries until $h \geq \costcrit(b)$ is at most
\begin{align*}
    \min\{k\in\Naturals: \eps2^k \geq \costcrit(b)\} &= \ceil{\log_2 \frac{\costcrit(b)}{\eps}}_{\geq 1} \\
    &\leq 
    \ceil{\log_2 \frac{\coststar}{\eps}}_{\geq 1}\,.
\end{align*}
At the end of this phase, we have $l \leq \costcrit(b) \leq h$
and since $h=2l$ we have $h-l \leq \costcrit(b)$.
Now for the binary phase, where the interval $[l, h]$ is always
at least halved.
Observe that $\queryext$ ensures that if $x \leq \cost < y$ then $l$ is set to $y$,
and if $y \leq \cost < z$ then $h$ is set to $y$ too.
Next we show that if $h - l < \fgapmin$ 
then $h < z$ and $l > x$:
\begin{align*}
\tag{by assumption} h - l &< \fgapmin \leq z-y\\
\tag{using $l \leq y$} h &< z-y+l \leq z \\
\tag{by assumption} h - l &< \fgapmin \leq y-x\ \\
\tag{using $h \geq y$} l &> x-y+h \geq x \,. 
\end{align*}
(Remembering that $y = \costcrit(b)$,
observe that if $\texttt{start}=y$ then $l = y$ already at the beginning of \cref{alg:es}.)
Thus  $h = l = y$ which terminates the algorithm and returns $l = \costcrit(b)$.

Hence the number of calls to query during the binary phase before $h-l < \fgapmin$ (which entails
$h=l$) is at most
(remembering that $h-l \leq \costcrit(b)$ at the end of the exponential phase)
\begin{align*}
    &\min\{k\in\Nonnegints: \costcrit(b)/2^k < \fgapmin\} \\
    &\qquad\qquad= \floor{1+ \log_2 \frac{\costcrit(b)}{\fgapmin}}_{\geq 0} \\
    &\qquad\qquad\leq 1+ \floor{\log_2 \frac{\coststar}{\fgapmin}}_{\geq 0}\,.
\end{align*}
Therefore, the total number of queries is at most 
$\nbisec(\texttt{start}, \costcrit(b), \fgapmin)$.
\end{proof}

\section{Proofs of \texorpdfstring{\cref{thm:ubs} and \cref{cor:ubs}}{}}

\begin{proof}[Proof of \cref{thm:ubs}]
Suppose when \ubs{} is executing $(k, \seg)$, the node
$(j, m)$ has already been executed by \ubs{}.
Then the optimality property of Uniform Cost Search~\cite{russell2009aima} ensures that $T(j, m) \leq T(k, \seg)$.
Hence, program $j$ has been executed for at most $T_j(k, \seg)$ steps.
The result follows by summing over all programs and using that program $j$ never runs for more than $\tau_j$ steps.
\end{proof}

\begin{proof}[Proof of \cref{cor:ubs}]
The result follows from \cref{thm:ubs} and the assumptions that $T(j, 0) = 0$ and $T(j, 1) \leq T(j, m)$ for $m \geq 1$.
\end{proof}

\section{Proofs of \texorpdfstring{\cref{thm:zzz,thm:zzz2}}{}}

\begin{proof}[Proof of \cref{thm:zzz}]
Let $k^* = \ceil{\log_2(\nstar)}$.
By \cref{prop:es-game-end}, 
programs with $k \geq k^*$ make a game-ending query after at most
\begin{align*}
\seg_k = \nbisec(\cost_{\min},\ \ \coststar,\ \ \cost_{\crit}(b_k) - \coststar)
\end{align*}
calls to \code{run_prog($k$, $2^k$)}.
By \cref{thm:ubs}, \cref{alg:zzzprog} does not use more than
$\min_{k \geq k^*} \seg_k 2^k\floor{k + \log_2 \seg_k}$
steps before exiting with a solution, which also upper bounds the number of expansions of the algorithm. 
The first bound is obtained by taking $k = k^*$ for which $b_{k^*} \leq 2 \nstar$ and
$\cost_{\crit}(2^{k_*}) \geq \cost_{\crit}(\nstar) = \coststar + \fgap(\coststar)$.
The second bound is obtained by noting that 
for any $b\geq \nstar$ there exists a $k$ such that
$2^k \geq \nstar$ and
$\seg_k 2^k \leq 2b\nbisec(\cost_{\min}, \coststar, \cost_{\crit(b)}-\coststar)$.
\end{proof}

\begin{proof}[Proof of \cref{thm:zzz2}]
Similarly to the proof of \cref{thm:rv},
let $k_* = \ceil{\log_2 (\nstar)}$  be the first program with enough budget,
that is $2^{k_*} \geq \nstar$.
Using \cref{prop:es-game-end},
program $k_*$ terminates after at most $\seg_2$ calls to $\query$.
Each program $k < k_*$ requires at most $\seg_1$
calls to $\query$ to terminate, that is $\tau_k \leq 2^k \seg_1$.
Each program $k > k_*$ that has started when $k_*$ terminates
has used at most $T_k(k_*, \seg_2)$ steps,
and only programs $k\leq k_* + \floor{\log_2 \seg_2}$ (that is $2^k \leq \seg_2 2^{k_*}$)
have started (that is, $T_k(k_*, \seg_2) > 0$).
Hence, using \cref{thm:ubs}, the number of steps is bounded by
\begin{align*}
    T_{\ubs}(k_*, \seg_2) &\leq \sum_{k < k_*} \tau_k + \sum_{k \geq k_*} T_k(k_*, \seg_2) \\
    &\leq \sum_{k=1}^{k_*-1} 2^k \seg_1 + \sum_{k=k_*}^{k_* + \floor{\log_2 n_2}} 2^{k_*}\seg_2 \\
    &= (2^{k_*}-1)\seg_1 + (1+\floor{\log_2 \seg_2})2^{k_*}\seg_2 \\
    &\leq 2\nstar(\seg_1 + \seg_2(1+\floor{\log_2 \seg_2}))\,,
\end{align*}
which also bounds the number of expansions of the algorithm.
\end{proof}

\fi

\ifdefined\nocomment{}\else
\clearpage
\input{todo} 
\fi

\end{document}